\newtheorem{theorem}{Theorem}
\newtheorem{definition}{Definition}
\newtheorem{proof}{Proof}
\begin{document}

\title{Coherently tracking the covariance matrix of an open quantum system}

\author{Zibo Miao$^1$}
\email[]{zibo.miao@anu.edu.au}
\author{Michael R. Hush$^2$}
\author{Matthew James$^1$}
\affiliation{$^1$ ARC Centre for Quantum Computation and Communication Technology, Research School of Engineering, The Australian National University}

\affiliation{$^2$ School of Information Technology and Electrical Engineering, University of New South Wales at the Australian Defence Force Academy}

\begin{abstract}
Coherent feedback control of quantum systems has demonstrable advantages over measurement-based control, but so far there has been little work done on \emph{coherent estimators} and more specifically \emph{coherent observers}. Coherent observers are input the coherent output of a specified quantum plant, and are designed such that some subset of the observer and plant's expectation values converge in the asymptotic limit. We previously developed a class of mean tracking (MT) observers for open harmonic oscillators that only converged in mean position and momentum; Here we develop a class of covariance matrix tracking (CMT) coherent observers that track both the mean and covariance matrix of a quantum plant. We derive necessary and sufficient conditions for the existence of a CMT observer, and find there are more restrictions on a CMT observer than there are on a MT observer. We give examples where we demonstrate how to design a CMT observer and show it can be used to track properties like the entanglement of a plant. As the CMT observer provides more quantum information than a MT observer, we expect it will have greater application in future coherent feedback schemes mediated by coherent observers. Investigation of coherent quantum estimators and observers is important in the ongoing discussion of quantum measurement; As they provide estimation of a system's quantum state without explicit use of the measurement postulate in their derivation.

\end{abstract}

\pacs{02.30.Yy,03.65.Yz,05.45.Xt,89.20.Kk,03.65.Ta}


\keywords{ Open harmonic oscillator, covariance matrix, Gaussian state, coherent quantum observer}

\maketitle

\section{Introduction}
Quantum engineering has seen rapid growth in the last two decades. Physicists, mathematicians and engineers have been working in unison to control a number of diverse systems in the quantum regime \cite{Sayrin:2011,Ruskov:2005, Hopkins:2003, Hush:2013, Bushev:2006, Thomsen:2002, Stockton:2004,HRHJ13}. Quantum control involving feedback has become particularly topical \cite{WM10,DP10,ZJ12S,PZMGUJ14}, as using information gained from a system can lead to more stable operation of a control protocol \cite{Szigeti:2013,James:2004}. Quantum feedback can be split into two paradigms: measurement-based and coherent feedback. Measurement-based feedback involves some measurement step in the feedback loop \cite{WM10,JNP08}, unfortunately measurements of quantum systems are typically slow and noisy as they involve coupling small quantum systems to macroscopic read out devices. Coherent feedback on the other hand is feedback where the controller and system are coupled directly without a measurement step \cite{SL00,NJP09,NY14}. The advantage is that the time scales of the controller and system can be made very similar as they are on the same scale. But beyond this practical advantage, there is increasing evidence that retaining the coherence of the feedback signal provides an intrinsic advantage over measurement-based feedback \cite{HM08,JNP08,WM94b,SL00}.

Coherent feedback is still a relatively new paradigm, and as such it lacks many of the tools commonly used in classical, or for that matter, other quantum feedback schemes. In particular, there are still only a limited number of options for coherently estimating a state within a feedback loop. It is well established classically that estimation using Kalman filters can provide improved performance over direct feedback schemes \cite{Stengel:1994}, and similar demonstrations have been performed for measurement-based quantum feedback \cite{Doherty:1999}. Unfortunately, traditional techniques do not appear to be applicable to coherent feedback due to difficulties with quantum conditioning onto non-commutative subspaces of signals. Instead, the closest option is so-called coherent quantum observers \cite{MJ12}. We previously developed a class of coherent quantum observers (as shown in Fig. \ref{Plant_CMTObs}), which can estimate the observables of linear and bilinear quantum plants described by quantum stochastic differential equations (QSDEs) in the sense of mean values, independent of any additional quantum noise in the observer \cite{MJ12,MEPUJ13}. We have proved that MT coherent observers can always be found, consistent with the laws of quantum mechanics, if the plant is detectable. In some cases the estimation of mean-values is sufficient and feedback can be improved with a coherent quantum observer. However, in many cases, the energy, correlations and indeed entanglement of the observed system may be the target of control or needed for feedback, for which the coherent quantum observer would not provide a reliable estimate. To remedy this issue, we propose to develop a modified coherent observer to track mean values, variances and correlations, namely, the CMT coherent observer.

In general, a CMT coherent observer outperforms a MT coherent observer in several respects. For instance, a CMT coherent observer allows us to achieve the most similar quantum state to that of the plant. Furthermore, it is well established that for a two-mode linear Gaussian system, the quantum correlations is completely characterized by the first and second moments \cite{AI05}, and thus entanglement can be mimicked by the utility of a CMT coherent quantum observer in this situation. Therefore, one can conclude a CMT observer can provide a better estimate in most cases. Nonetheless, we find that the error convergence rate of a CMT coherent observer can not be made arbitrarily high, plus we cannot guarantee that CMT coherent observers exist for systems where mean value coherent observers exist.  
\begin{figure}[!htp]
\centering
\includegraphics[scale=.45]{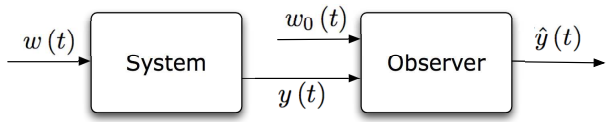}
\caption{A quantum plant and the corresponding coherent quantum observer in a cascade arrangement}
\label{Plant_CMTObs}
\end{figure}

The paper is organized as follows. We begin in Section \ref{sec:OHOALQ} by presenting the linear quantum state space model for open harmonic oscillators in the Heisenberg picture. In Section \ref{sec:QPACQO}, we briefly discuss quantum plants and (MT) coherent quantum observers. In Section \ref{sec:CCOFOHO}, we analyze the existence of CMT coherent observers, and show theorems which tell us how to construct CMT observers to be consistent with the laws of quantum mechanics. This is followed by numerical simulations in Section \ref{QCANS}, which illustrate the design and performance of CMT observers. Section \ref{sec:D} provides some concluding remarks and future research directions. The mathematical notation we use is defined in Appendix \ref{sec:app}. 

\section{Open harmonic oscillators and linear QSDES} 
\label{sec:OHOALQ}

The dynamics of an open quantum system are uniquely determined by the parametrization $(S,L,H)$ \cite{KP92,GJ09J1,JG10}. The self-adjoint operator $H$ is the Hamiltonian describing the self-energy of the system. The unitary matrix $S$ is a scattering matrix, and the column vector $L$  with operator entries is a coupling vector. $S$ and $L$ together specify the interface between the system and the fields. In the physics literature, it is common practice to describe open quantum systems  using a master equation for a density operator $\rho$, and it can easily be obtained from the triple $(S,L,H)$; indeed, we have
\begin{align}
d\rho=\left(i\left[\rho,H\right]+\mathcal{L^{\ast}}\left(\rho\right)\right)dt
\label{eq:mastereq}
\end{align}
where $\mathcal{L}^{*}\left(\rho\right)=L^{T}\rho L^{\sharp}-\frac{1}{2}L^{\sharp}L^{T}\rho-\frac{1}{2}\rho L^{\sharp}L^{T}$ (notation defined in Appendix \ref{sec:app}) and we assume natural units are being used. Given an operator $X$ defined on the initial Hilbert space $\mathsf{H}$, its Heisenberg evolution is defined by
\begin{align}
dX=&\left(\mathcal{L}\left(X\right)-i\left[X,H\right]\right)dt+dW^{\dagger}S^{\dagger}\left[X,L\right]\nonumber\\
&+\left[L^{\dagger},X\right]SdW+\mathrm{tr}\left[\left(S^{\dagger}XS-X\right)d\Lambda_{w}\right].
\label{eq:SLH}
\end{align}
With
\begin{align}
\mathcal{L}\left(X\right)=\frac{1}{2}L^{\dagger}\left[X,L\right]+\frac{1}{2}\left[L^{\dagger},X\right]L,
\label{eq:Lindblad}
\end{align}
which is called the Lindblad superoperator (Note $\mathcal{L}^{*}\left(\cdot\right)$ is the adjoint superoperator of $\mathcal{L}\left(\cdot\right)$). The operators $W$ are defined on a particular Hilbert space called a Fock space $\mathsf{F}$. When the fields (the number of fields is $n_w$) are in the vacuum states, these are the quantum Wiener processes which satisfy the It\^o rule
\begin{align*}
dWdW^{\dagger} = I_{n_w}dt.
\end{align*}
Input field quadratures $W+W^{\sharp}$ and $-i\left(W-W^{\sharp}\right)$ are each equivalent to classical Wiener processes, but do not commute. A field quadrature can be measured using homodyne detection \cite{GZ00,GJ09J1}. The gauge processes $\Lambda_{w}$ are input signals to the system as well.

We assume there is no interaction between different fields, and thus hereafter we assume $S$ to be the identity matrix without loss of generality \cite{JNP08}. This assumption eliminates the first time on the right hand side of Eq.~\eqref{eq:SLH}. To be specific,
\begin{align}
dX = & \left(\mathcal{L}\left(X\right)-i\left[X,H\right]\right)dt \nonumber\\
&+ \frac{1}{2}\left([X,L]-[X,L^\dagger]
\right)dW_1  \nonumber \\
&-\frac{i}{2}\left([X,L]+[X,L^\dagger]
\right)dW_2,
\label{eq:SLHqua}
\end{align}
with 
\begin{align*}
\left[\begin{array}{c} W_1 \\  W_2 \end{array}\right] = \left[\begin{array}{c} W+W^\sharp  \\ -i(W-W^\sharp) \end{array}\right].
\end{align*}

The quadrature form of the output fields is given by
\begin{align}
\left[\begin{array}{c} dY_1 \\  dY_2 \end{array}\right] = \left[\begin{array}{c} L+L^\sharp  \\ -i(L-L^\sharp) \end{array}\right] \,dt
+ \left[\begin{array}{c} dW_1 \\ dW_2 \end{array}\right].
\label{eq:outfields}
\end{align}

In this work we focus on \emph{open harmonic oscillators}. The dynamics of each oscillator is described by two Hermitian operators position $q_i$ and momentum $p_i$, which satisfy the canonical commutation relations $[q_i,p_j] = 2i \delta_{ij}$ where $\delta_{ij}$ is the Kronecker delta. For our purposes, it is convenient to collect the position and momentum operators of the oscillators into an $n$-dimensional column vector $x\left(t\right)$, defined by  $x\left(t\right)=\left(q_{1}\left(t\right),p_{1}\left(t\right),q_{2}\left(t\right),p_{2}\left(t\right), \ldots,q_{n}\left(t\right),p_{n}\left(t\right)\right)^{T}$. In this case the commutation relations can be re-written as:
 \begin{equation}
 x\left(t\right)x\left(t\right)^{T}-\left(x\left(t\right)x\left(t\right)^{T}\right)^{T}=2i\Theta_n \label{cancom}
 \end{equation}
 where $\Theta_{n} = I_{\frac{n}{2}}\otimes J$ with $J=\left[\begin{array}{cc}0 & 1\\-1 & 0\end{array}\right]$. 
 
Harmonic oscillators, in particular, are defined by having a quadratic Hamiltonian of the form $H = \frac{1}{2}x^{T}Rx$ with $R$ being a $\mathbb{R}^{n \times n}$ symmetric matrix, and a coupling operator of the form $L=\Lambda x$ with $\Lambda$ being a $\mathbb{C}^{\frac{n_w}{2} \times n}$ matrix (here $n$, $n_w$ and $n_y$ are positive even numbers). A special property of open harmonic oscillators is that the differential equations governing $x(t)$ are \emph{linear}. If we use an $n_y$-dimensional column vector $y\left(t\right)$ to incorporate all the quadratures of the output fields then, based on Eqs.~\eqref{eq:SLHqua} and \eqref{eq:outfields}, the dynamics of a set of open harmonic oscillators can be described by the following linear QSDEs \cite{JNP08}:
\begin{subequations}
\label{eq:qho}
\begin{align}
dx\left(t\right)&=Ax\left(t\right)dt+Bdw\left(t\right), \\
dy\left(t\right)&=Cx\left(t\right)dt+Ddw\left(t\right)
\end{align}
\end{subequations}
where $A$, $B$, $C$, $D$ are $\mathbb{R}^{n\times n}$, $\mathbb{R}^{n\times n_w}$, $\mathbb{R}^{n_y\times n}$  and $\mathbb{R}^{n_y\times n_w}$ matrices respectively defined in terms of $H$ and $L$ as follows:
\begin{subequations}
\label{ABCDintermsofHL}
\begin{align}
A &= 2\Theta_n\left(R+\Im\left(\Lambda^\dagger \Lambda\right)\right),\\
B &= 2i\Theta_n \left[\begin{array}{cc}
-\Lambda^{\dagger} & \Lambda^{T}\end{array}\right]\Gamma_{n},\\
C &= P_{n_{y}}^{T}\left[\begin{array}{cc}
T_{\frac{n_{y}}{2}} & 0_{\frac{n_{y}}{2}\times\frac{n_{w}}{2}}\\
0_{\frac{n_{y}}{2}\times\frac{n_{w}}{2}} & T_{\frac{n_{y}}{2}}
\end{array}\right]\left[\begin{array}{c}
\Lambda+\Lambda^{\sharp}\\
-i\Lambda+i\Lambda^{\sharp}
\end{array}\right],\\
D &= \left[\begin{array}{cc}I_{n_y} & 0_{n_y\times\left(n_w-n_y\right)}\end{array}\right],
\end{align}
\end{subequations}
with
\begin{align*}
T_{\frac{n_{y}}{2}} &=\left[\begin{array}{cc}
I_{\frac{n_{y}}{2}} & 0_{\frac{n_{y}}{2}\times\frac{\left(n_{w}-n_{y}\right)}{2}}\end{array}\right], \\
\Gamma_{m} &=P_{m} I_{\frac{m}{2}} \otimes  M, \\
M&=\frac{1}{2}\left[\begin{array}{cc}1 & i\\1 & -i\end{array}\right], \\
\end{align*}
and the symbol $P_{m}$ ($m$ is a positive even number) denotes an $m \times m$ permutation matrix defined so that if we consider a column vector $a=\left[\begin{array}{cccc}a_{1} & a_{2} & \cdots & a_{m}\end{array}\right]^{T}$,
then $
P_{m}a=\left[\begin{array}{cccccccc}a_{1} & a_{3} & \cdots & a_{m-1} & a_{2} & a_{4} & \cdots & a_{m}\end{array}\right]^{T}$. 

In this work we are primarily interested in \emph{engineering} the $A,B,C$ and $D$ matrices rather than deriving them from $H$ and $L$. When engineering, instead of using Eqs. (\ref{ABCDintermsofHL}) we instead typically use the so-called physical realizability conditions:
\begin{subequations}
\label{eq:qhoprc}
\begin{align}
&A\Theta_{n}+\Theta_{n} A^{T}+B\Theta_{n_{w}} B^{T}=0,\\
&BD^T=\Theta_{n} C^{T}\Theta_{n_{y}},
\end{align}
\end{subequations}
These are algebraic constraints, independent of $H$ and $L$, which the coefficient matrices $A,B,C$ and $D$ must obey for them to correspond to a physically realizable quantum system. They were originally derived by requiring the canonical commutation relations of $x(t)$ ($y(t)$) must hold for all times, a property enjoyed by open physical systems undergoing an overall unitary evolution \cite{GZ00,JNP08}. But since it has been proven: given a set of $A,B,C$ and $D$ matrices that satisfy Eqs.~(\ref{eq:qhoprc}) a corresponding $H$ and $L$ can always be found that satisfy Eqs.~(\ref{ABCDintermsofHL}) (e.g. see Theorem 3.4 in \cite{JNP08}).

\section{Quantum plants and coherent quantum observers}
\label{sec:QPACQO}

The primary goal of this work is to create a \emph{coherent quantum observer} which asymptotically tracks the observables of some arbitrary \emph{quantum plant} \cite{JNP08,MJ12,MEPUJ13}. We assume the {\em quantum plant} is some system of open harmonic oscillators with a set of $A_p$,$B_p$,$C_p$ and $D_p$ matrices which are known but we are unable to change. The linear QSDEs (see Section \ref{sec:OHOALQ}) for the plant is then:
\begin{subequations}
\label{eq:qsys}
\begin{align}
dx_p\left(t\right)&=A_px_p\left(t\right)dt+B_pdw_p\left(t\right), \\
dy_p\left(t\right)&=C_px_p\left(t\right)dt+D_pdw_p\left(t\right)
\end{align}
\end{subequations}
where $A_p$, $B_p$, $C_p$ are $\mathbb{R}^{n_x\times n_x}$, $\mathbb{R}^{n_x\times n_{w_p}}$ and $\mathbb{R}^{n_{y_p}\times n_x}$ matrices respectively (here $n_x$, $n_{w_p}$ and $n_{y_p}$ are positive even numbers), and $D_p=\left[\begin{array}{cc}I_{n_{y_p}} & 0_{n_{y_p}\times\left(n_{w_p}-n_{y_p}\right)}\end{array}\right]$.
Furthermore, $A_p$, $B_p$, $C_p$ and $D_p$ satisfy the following physical realizability conditions
\begin{subequations}
\label{eq:qsyscommu}
\begin{align}
&A_p\Theta_{n_x}+\Theta_{n_x} A_p^{T}+B_p\Theta_{n_{w_p}} B_p^{T}=0,\\
&B_pD_p^T=\Theta_{n_x} C_p^{T}\Theta_{n_{y_p}}.
\end{align}
\end{subequations}
 As shown in Fig. \ref{Plant_CMTObs}, we take the quantum output signal of the plant and directly fed it into the coherent quantum observer \cite{MJ12,MEPUJ13,VA14}.

An {\em (MT) coherent quantum observer} is another system of quantum harmonic oscillators which we engineer such that the system variables track those of the quantum plant asymptotically in the sense of mean values. The coherent quantum observer is driven by the output of the quantum plant directly; No measurement is involved. A coherent quantum observer has equations of the form
\begin{subequations}
\label{eq:qobs}
\begin{align}
dx_o\left(t\right)&=\left(A_p-KC_p\right)x_o\left(t\right)dt+Kdy_p\left(t\right)+B_odw_o\left(t\right),\\
dy_o\left(t\right)&=C_ox_o\left(t\right)dt+D_o\left[\begin{array}{cc}dy_p\left(t\right)^{T} & dw_o\left(t\right)^{T}\end{array}\right]^{T}
\end{align}
\end{subequations}
where: the $n_x$-dimensional column vector $x_o\left(t\right)$ denotes the \lq\lq{estimate}\rq\rq \  of $x_p\left(t\right)$; $K$, $B_o$ are $\mathbb{R}^{n_x\times n_{y_p}}$, $\mathbb{R}^{n_x\times n_{w_o}}$ matrices respectively; and $D_o$ is given by $D_o=\left[\begin{array}{cc} I_{n_{y_o}} & 0_{n_{y_o}\times\left(n_{y_p}+n_{w_o}-n_{y_o}\right)}\end{array}\right]$. Note that the system described by Eqs.~(\ref{eq:qobs}) must also satisfy the following physical realizability conditions
\begin{subequations}
\label{eq:MHOPR}
\begin{align}
&\left(A_p-KC_p\right)\Theta_{n_{x}}+\Theta_{n_{x}}\left(A_p-KC_p\right)^{T}\nonumber\\
&+K\Theta_{n_{y}}K^{T}+B_o\Theta_{n_{w_o}}B_o^{T}=0,\\
&\left[\begin{array}{cc} K & B_o\end{array}\right]D_{o}^{T}=\Theta_{n_{x}}C_{o}^{T}\Theta_{n_{y_o}}
\end{align}
\end{subequations}
which put restrictions on $K$ and $B_o$ \cite{MJ12}. In the case of $B_o\neq0$, the algebraic constraints Eqs.~\eqref{eq:MHOPR} indicate that an additional quantum noise signal $w_0(t)$ is needed. 

We use $\mu_{p}\left(t\right)$ and $\mu_{o}\left(t\right)$ to denote the first moments of the plant and the observer respectively, i.e.,
\begin{align*}
\mu_{p}\left(t\right)=&\left\langle x_p\left(t\right)\right\rangle,\\
\mu_{o}\left(t\right)=&\left\langle x_o\left(t\right)\right\rangle.
\end{align*}
The equations of motion for the first moments of the plant and the observer are:
\begin{subequations}
\label{eq:firstmom}
\begin{align}
\dot{\mu}_{p}\left(t\right)&=A_p\mu_{p}\left(t\right),\\
\dot{\mu}_{o}\left(t\right)&=\left(A_p-KC_p\right)\mu_{o}\left(t\right)+KC_p\mu_{p}\left(t\right).
\end{align}
\end{subequations}
Now we define $e_{\mu}\left(t\right)=\mu_{p}\left(t\right)-\mu_{o}\left(t\right)$ as the error which gives the difference between the first moments of the plant and the corresponding observer. And according to Eqs. (\ref{eq:firstmom}) , it evolves as
\begin{align}
\dot{e}_{\mu}\left(t\right)=\left(A_p-KC_p\right)e_{\mu}\left(t\right).
\end{align}
$e_{\mu}$ converges to zero asymptotically if and only if $A_p-KC_p$ is Hurwitz \cite{MJ12}. Hurwitz here means that all the eigenvalues of $A_p-KC_p$ have strictly negative real parts, and hence $\underset{t\rightarrow\infty}{\lim} e_{\mu}(t) = 0$.

Thus given a quantum plant described by Eqs.~(\ref{eq:qsys}), the coefficient matrices of a MT coherent quantum observer described by Eqs.~(\ref{eq:qsys}) are designed such that 
\begin{enumerate}
\item $\left(A_p-KC_p\right)$ is Hurwitz;
\item The system described by Eqs.~(\ref{eq:qobs}) corresponds to an open quantum harmonic oscillator.
\end{enumerate}

Furthermore, a MT coherent quantum observer can always be found with arbitrary rates of error convergence (proportional to the real parts of eigenvalues of $A_p-KC_p$) for a \emph{detectable} plant \cite{MJ12}. The term detectable comes from classical control \cite{KS72,GE02}, and it means that all modes of the plant are either observable or stable. Where observability means only given the outputs the state of a mode can be determined in finite time. Whether a plant is detectable or not can be judged entirely from the $A_p$ and $C_p$ matrices (e.g., see \cite{KS72,GE02} for details). Detectability then forms a sufficient (but not necessary) condition for a MT coherent observer to exist.

A MT coherent observer is limited in that it only tracks the mean values of the plant. The covariance matrix is not guaranteed to match between the plant and the MT coherent observer. This means that important properties, e.g. the entanglement or energy of the plant, may not be correctly estimated. We aim to remove these limitations and create a CMT coherent observer whose quantum state matches the plant completely in the asymptotic limit. 

\section{CMT coherent observers for open harmonic oscillators}
\label{sec:CCOFOHO}
As an extension of a MT observer, here we create a CMT observer which also tracks the covariance matrix of a quantum plant. We require that a CMT observer also tracks the mean values, thus every CMT observer is also a MT observer (but not vice versa).

Let us use $\Sigma_{p}\left(t\right)$ and $\Sigma_{o}\left(t\right)$ to denote the covariance matrix of the plant and the observer respectively, and $\Sigma_{po}$ denotes the cross variance:
\begin{align*}
\Sigma_{p}\left(t\right)=&\frac{1}{2}\left\langle x_p\left(t\right)x_p^{T}\left(t\right)+\left(x_p\left(t\right)x_p^{T}\left(t\right)\right)^{T}\right\rangle \\
&- \langle x_p\left(t\right)\rangle\langle x_p^T\left(t\right)\rangle,\\
\Sigma_{o}\left(t\right)=&\frac{1}{2}\left\langle x_o\left(t\right)x_o^{T}\left(t\right)+\left(x_o\left(t\right)x_o^{T}\left(t\right)\right)^{T}\right\rangle\\
&- \langle x_o\left(t\right)\rangle\langle x_o^T\left(t\right)\rangle,\\
\Sigma_{po}\left(t\right)=&\left\langle x_{p}\left(t\right)x_{o}^{T}\left(t\right)\right\rangle - \langle x_p\left(t\right)\rangle\langle x_o^T\left(t\right)\rangle.
\end{align*}
The evolutions for the correlation matrices of the plant and observer are given by:
\begin{subequations}
\label{eq:sigma_p}
\begin{align}
\dot{\Sigma}_{p}\left(t\right)=&A_p\Sigma_{p}\left(t\right)+\Sigma_{p}\left(t\right)A_p^{T}+B_pB_p^{T},\\
\dot{\Sigma}_{po}\left(t\right)=&A_p\Sigma_{po}\left(t\right)+\Sigma_{po}\left(t\right)\left(A_p-KC_p\right)^{T}\nonumber\\
&+\Sigma_{p}\left(t\right)\left(KC_p\right)^{T}+B_pK^{T},\\
\dot{\Sigma}_{po}^{T}\left(t\right)=&\Sigma_{po}^{T}\left(t\right)A_p^{T}+\left(A_p-KC_p\right)\Sigma_{po}^{T}\left(t\right)\nonumber\\
&+KC_p\Sigma_{p}\left(t\right)+KB_p^{T},\\
\dot{\Sigma}_{o}\left(t\right)=&\left(A_p-KC_p\right)\Sigma_{o}\left(t\right)+\Sigma_{o}\left(t\right)\left(A_p-KC_p\right)^{T}\nonumber\\
&+KC_p\Sigma_{po}\left(t\right)+\Sigma_{po}^{T}\left(t\right)\left(KC_p\right)^{T}\nonumber\\
&+KK^{T}+B_oB_o^T,
\end{align}
\end{subequations}
where $\Sigma_p\left(t\right)$, $\Sigma_o\left(t\right)$ and  $\Sigma_{po}\left(t\right)$ are real matrices with $\Sigma_p\left(t\right)$ and $\Sigma_o\left(t\right)$ nonnegative. The difference between $\Sigma_{p}\left(t\right)$ and $\Sigma_{o}\left(t\right)$ is $e_{\Sigma}\left(t\right)=\Sigma_{p}\left(t\right)-\Sigma_{o}\left(t\right)$. Then a CMT coherent quantum observer is defined as
\begin{definition} Given a system described by Eqs.~(\ref{eq:qsys}), a system described by Eqs.~(\ref{eq:qobs}) is a {\em CMT coherent quantum observer} for the system described by Eqs.~(\ref{eq:qsys}) if
\begin{enumerate}
\item The system described by Eqs.~(\ref{eq:qobs}) is a MT coherent quantum observer for the system described by Eqs.~(\ref{eq:qsys});
\item The covariance matrix of the observer described by Eqs.~(\ref{eq:qobs}) tracks that of the plant  described by Eqs.~(\ref{eq:qsys}) asymptotically, i.e.,
\begin{align*}
\underset{t\rightarrow\infty}{\lim}\Sigma_{p}\left(t\right)-\Sigma_{o}\left(t\right)=\underset{t\rightarrow\infty}{\lim}e_{\Sigma}\left(t\right)=0.
\end{align*}
\end{enumerate}
\end{definition}
Our main theorem which concerns the existence of a CMT coherent quantum observer of the form (\ref{eq:qobs}) is presented below.
\begin{theorem}
There exists a CMT coherent quantum observer described by Eqs.~(\ref{eq:qobs}) for a quantum plant described by Eqs.~(\ref{eq:qsys}) if and only if
\begin{enumerate}
\item $A_p-KC_p$ is Hurwitz;
\item The following identity
\begin{align}
\underset{s\rightarrow0}{\lim}&\left(E_{o}\otimes E_{o}-E_{p}\otimes E_{p}\right)\nonumber\\
&\times \left(sI_{4n_{x}^{2}}-I_{n_{2x}}\otimes A-A\otimes I_{n_{2x}}\right)^{-1}\nonumber\\
&\times \mathrm{vec}\left(BB^{T}\right)=0
\label{eq:secmom}
\end{align}
holds. Here
\begin{align*}
E_{p}&=\left[\begin{array}{cc}
I_{n_{x}} & 0_{n_{x}}\end{array}\right],\\
E_{o}&=\left[\begin{array}{cc}
0_{n_{x}} & I_{n_{x}}\end{array}\right],
\end{align*}
and the coefficient matrices of a joint plant-observer system are given by
\begin{align*}
A&=\left[\begin{array}{cc}A_p & 0\\KC_p & A_p-KC_p\end{array}\right],\\
B&=\left[\begin{array}{cc}B_p & 0\\KD_p & B_o\end{array}\right];
\end{align*}
\item The system described by Eqs.~\eqref{eq:qobs} is physically realizable.
\end{enumerate}
\end{theorem}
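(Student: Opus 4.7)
The plan is to split the argument into necessity and sufficiency, and to dispose of conditions (1) and (3) essentially by inspection while devoting the bulk of the work to condition (2). Condition (1) is exactly the Hurwitz criterion for mean tracking that was already established for the MT observer, and since a CMT observer is by definition also an MT observer, it is both necessary and sufficient for the mean-tracking part. Condition (3) is the physical realizability requirement (\ref{eq:MHOPR}) that any bona fide quantum observer must satisfy. The substantive step is therefore to show that condition (2) is equivalent to the asymptotic covariance-matching requirement $\lim_{t\to\infty}e_\Sigma(t)=0$.

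To handle (2), I would first form the joint plant-observer state $\tilde{x}(t)=[x_p(t)^T,x_o(t)^T]^T$ and, by substituting $dy_p$ from (\ref{eq:qsys}) into the observer dynamics (\ref{eq:qobs}), verify that $\tilde{x}$ obeys $d\tilde{x}=A\tilde{x}\,dt+B\,d\tilde{w}$ with the block matrices $A$ and $B$ given in the theorem and $\tilde{w}=[w_p^T,w_o^T]^T$. A standard It\^o calculation then yields the block Lyapunov ODE $\dot{\tilde{\Sigma}}=A\tilde{\Sigma}+\tilde{\Sigma}A^T+BB^T$ for the joint symmetrised covariance, whose diagonal blocks are $\Sigma_p=E_p\tilde{\Sigma}E_p^T$ and $\Sigma_o=E_o\tilde{\Sigma}E_o^T$, so that the tracking error vectorises as $\mathrm{vec}(e_\Sigma)=(E_p\otimes E_p-E_o\otimes E_o)\mathrm{vec}(\tilde{\Sigma})$.

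Vectorising via $\mathrm{vec}(MNP)=(P^T\otimes M)\mathrm{vec}(N)$ converts the Lyapunov ODE into the linear system $\tfrac{d}{dt}\mathrm{vec}(\tilde{\Sigma})=(I\otimes A+A\otimes I)\mathrm{vec}(\tilde{\Sigma})+\mathrm{vec}(BB^T)$. Taking Laplace transforms, solving for $\mathrm{vec}(\tilde{\Sigma}(s))$, and then multiplying on the left by $(E_o\otimes E_o-E_p\otimes E_p)$ gives a resolvent expression for the Laplace transform of the tracking error; applying the final value theorem (the $s$ from the theorem combines with the $1/s$ from the constant forcing $\mathrm{vec}(BB^T)$) identifies $\lim_{t\to\infty}e_\Sigma(t)$ with exactly the left hand side of (\ref{eq:secmom}). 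This yields necessity (a CMT observer forces the limit to vanish) and sufficiency (vanishing of the limit together with (1) forces $e_\Sigma\to 0$) in one stroke.

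The main obstacle I foresee is justifying the final value theorem when $A$ fails to be Hurwitz. Because $A$ is block lower triangular with spectrum equal to the union of the spectra of $A_p$ and $A_p-KC_p$, condition (1) only stabilises the lower block; if $A_p$ carries neutrally stable or unstable modes, they appear as poles in $\mathrm{vec}(\tilde{\Sigma}(s))$ at or to the right of $s=0$. The key observation to make rigorous is that the cascade driving of the observer by $KC_p x_p$ forces each such plant mode to appear identically in both diagonal blocks of $\tilde{\Sigma}(s)$, so its contribution is annihilated by the projector $E_p\otimes E_p-E_o\otimes E_o$ and the remaining resolvent is regular at $s=0$. Once this cancellation is verified, the asymptotic behaviour of $e_\Sigma$ is governed solely by the Hurwitz block $A_p-KC_p$ together with the forcing $BB^T$, leaving (\ref{eq:secmom}) as the clean algebraic criterion for covariance tracking.
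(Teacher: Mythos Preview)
Your proposal is correct and takes essentially the same approach as the paper: assemble the joint Lyapunov ODE for the block covariance, vectorise, Laplace-transform, and invoke the final value theorem to identify $\lim_{t\to\infty}e_\Sigma(t)$ with the expression in (\ref{eq:secmom}). You are in fact more scrupulous than the paper in flagging the applicability of the final value theorem when $A_p$ is not Hurwitz; the paper glosses over this point and simply asserts the equivalence without discussing the pole cancellation you identify.
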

\begin{proof}
First of all, in order to ensure the convergence of $e_\mu\left(t\right)$, $A_p-KC_p$ must be Hurwitz.

The covariance matrix for the joint plant-observer system denoted $\Sigma\left(t\right)=\left[\begin{array}{cc}
\Sigma_{p}\left(t\right) & \Sigma_{po}\left(t\right)\\
\Sigma_{po}^{T}\left(t\right) & \Sigma_{o}\left(t\right)
\end{array}\right]$
 satisfies the following Lyapunov differential equation
\begin{align}
\dot{\Sigma}\left(t\right)=A\Sigma\left(t\right)+\Sigma\left(t\right)A^T+BB^T.
\label{eq:CMcl}
\end{align}
Note that
\begin{align*}
\Sigma_p\left(t\right) &= E_p\Sigma \left(t\right)E_p^T,\\
\Sigma_o\left(t\right) &= E_o\Sigma \left(t\right)E_o^T
\end{align*}
and thus
\begin{align}
\mathrm{vec}\left(\Sigma_{p}\left(t\right)\right)=E_{p}\otimes E_{p}\mathrm{vec}\left(\Sigma\left(t\right)\right),\\
\mathrm{vec}\left(\Sigma_o\left(t\right)\right)=E_o\otimes E_o\mathrm{vec}\left(\Sigma\left(t\right)\right).
\end{align}
By using the Laplace transform $\mathcal{L}\left(\cdot\right)$ to Eq.~(\ref{eq:CMcl}), we can obtain
\begin{align*}
\mathcal{L}\left(\mathrm{vec}\left(\Sigma\left(t\right)\right)\right)=&\left(sI_{4n_{x}^{2}}-I_{n_{2x}}\otimes A-A\otimes I_{n_{2x}}\right)^{-1}\\
&\times \left(\frac{\mathrm{vec}\left(BB^{T}\right)}{s}+\mathrm{vec}\left(\Sigma\left(0\right)\right)\right),
\end{align*}
then
\begin{align*}
&\mathcal{L}\left(\mathrm{vec}\left(\Sigma_{o}\left(t\right)-\Sigma_{p}\left(t\right)\right)\right)=\\
&\left(E_{o}\otimes E_{o}-E_{p}\otimes E_{p}\right)\left(sI_{4n_{x}^{2}}-I_{n_{2x}}\otimes A-A\otimes I_{n_{2x}}\right)^{-1}\\
&\times \left(\frac{\mathrm{vec}\left(BB^{T}\right)}{s}+\mathrm{vec}\left(\Sigma\left(0\right)\right)\right).
\end{align*}
Since a CMT observer has the property that $\underset{t\rightarrow\infty}{\lim}e_{\Sigma}\left(t\right)=0$, we have to require that all the poles of $\mathcal{L}\left(\rm{vec}\left(e_{\Sigma}\left(t\right)\right)\right)$ are located on the left side of the $s$-plane. Or equivalently,
\begin{align}
\underset{s\rightarrow0}{\lim}s\mathcal{L}\left(\mathrm{vec}\left(e_{\Sigma}\left(t\right)\right)\right)=&\underset{s\rightarrow0}{\lim}\left(E_{o}\otimes E_{o}-E_{p}\otimes E_{p}\right)\nonumber\\
&\times \left(sI_{4n_{x}^{2}}-I_{n_{2x}}\otimes A-A\otimes I_{n_{2x}}\right)^{-1}\nonumber\\
&\times \mathrm{vec}\left(BB^{T}\right)\nonumber\\
=&0
\end{align}
which gives Eq. (\ref{eq:secmom}).

Finally, Eqs.~(\ref{eq:qobs}) must correspond to an open harmonic oscillator, which requires that the physical realizability condition given by Eqs.~(\ref{eq:MHOPR}) should hold \cite{MJ12,JNP08}.
\end{proof}

We have found necessary and sufficient conditions for the existence of a CMT observer. But it is still a challenging task to construct a CMT observer by solving Eqs.~\eqref{eq:secmom} and~\eqref{eq:MHOPR}. Thus we consider a special case where it is easier to construct a CMT observer. Specifically, we assume $A_p$ is Hurwitz. Any plant with a unique steady state has an $A_p$ matrix which is Hurwitz.

The primary advantage of $A_p$ being Hurwitz, is that we can guarantee the existence of steady states values for all the covariance matrices (i.e., $\underset{t\rightarrow\infty}{\lim} \dot\Sigma_p(t) = 0$). Solving Eqs.~\eqref{eq:sigma_p} in steady state gives:
\begin{align}
\label{eq:CMD}
&\left(A_p-KC_p\right)e_{\Sigma}+e_{\Sigma}\left(A_p-KC_p\right)^{T}\nonumber\\
&+KC_p\left(\Sigma_p-\Sigma_{po}\right)+\left(\Sigma_p-\Sigma_{po}^T\right)\left(KC_p\right)^{T}\nonumber\\
&+B_pB_p^{T}-KK^{T}-B_oB_o^{T}=0
\end{align}
where the steady state $e_\Sigma \equiv \underset{t\rightarrow\infty}{\lim} e_{\Sigma}\left(t\right)$.

Furthermore when $A_p-KC_p$ is Hurwitz (as required for a CMT observer),  then ${e}_{\Sigma}=0$. Substituting ${e}_{\Sigma}=0$ to Eq.~\eqref{eq:CMD}  gives
\begin{align}
&KC_p\left(\Sigma_{p}-\Sigma_{po}\right)+\left(\Sigma_{p}-\Sigma_{po}^T\right)\left(KC_p\right)^{T}\nonumber\\
&+B_pB_p^{T}-KK^{T}-B_oB_o^{T}=0
\label{eq:B_o}
\end{align}
in steady state.

\begin{theorem}
Assume the quantum plant described by Eqs.~(\ref{eq:qsys}) is detectable with $A_p$ Hurwitz. The system  described by Eqs.~(\ref{eq:qobs}) is a CMT coherent quantum observer for the plant described by Eqs.~(\ref{eq:qsys}) if and only if
\begin{enumerate}
\item $A_p - KC_p$ is Hurwitz;
\item The following matrix inequality
\begin{align}
&KC_p\left(\Sigma_p-\Sigma_{po}\right) + \left(\Sigma_p-\Sigma_{po}\right)^T\left(KC_p\right)^T\nonumber\\
&+B_pB_p^T - KK^T - i K\Theta_{n_{y_p}}K^{T}\nonumber\\
&-i\left(A-KC_p\right)\Theta_{n_{_{x}}} - i\Theta_{n_{x}}\left(A-KC_p\right)^{T} \succeq0 \label{ineq:thsteady}
\end{align}

holds, where $\left(\Sigma_p-\Sigma_{po}\right)$ is the unique solution to the following Sylvester equation
\begin{align}
&A_p\left(\Sigma_{p}-\Sigma_{po}\right)+\left(\Sigma_{p}-\Sigma_{po}\right)\left(A_p-KC_p\right)^{T}\nonumber\\
&+B_pB_p^{T}-B_pK^{T} = 0.
\label{eq:Sigma_p-po}
\end{align}
\end{enumerate}
Assuming the two conditions above hold, the coupling operator characterizing the interaction between the observer and additional boson fields is then given by $L_o=\Lambda_ox_o$ where $\Lambda_o$ is any $\frac{n_{w_o}}{2}\times n_{x}$ complex matrix such that
\begin{align}
\Lambda_o^{\dagger}\Lambda_o=&-\frac{i}{4}\Theta_{n_{_{x}}}\left(A-KC_p\right)-\frac{i}{4}\left(A-KC_p\right)^{T}\Theta_{n_{x}}\nonumber\\
&+\frac{i}{4}\Theta_{n_{x}}K\Theta_{n_{y_p}}K^{T}\Theta_{n_{x}}\nonumber\\
&-\frac{1}{4}\Theta_{n_{x}}KC_p\left(\Sigma_p-\Sigma_{po}\right)\Theta_{n_{x}}\nonumber\\
&-\frac{1}{4}\Theta_{n_x}\left(\Sigma_p-\Sigma_{po}\right)^T\left(KC_p\right)^T\Theta_{n_x}\nonumber\\
&-\frac{1}{4}\Theta_{n_x}B_pB_p^T\Theta_{n_x}+\frac{1}{4}\Theta_{n_x}KK^T\Theta_{n_x}.
\end{align}
\label{thsteady}
\end{theorem}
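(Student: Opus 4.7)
The strategy is to reduce the asymptotic requirement $\lim_{t\to\infty}e_\Sigma(t)=0$ to an algebraic identity at steady state and then identify the resulting constraint with the positive semi-definiteness of $\Lambda_o^\dagger\Lambda_o$. Since $A_p$ is Hurwitz by hypothesis and $A_p-KC_p$ is forced to be Hurwitz by condition~1, the joint closed-loop matrix $A$ is Hurwitz; hence $\Sigma_p(t)$, $\Sigma_{po}(t)$ and $\Sigma_o(t)$ all converge to well-defined steady values, and the CMT condition becomes the algebraic statement $e_\Sigma=0$ at steady state.

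First I would set $\dot\Sigma_p=0$ and $\dot\Sigma_{po}=0$ in the evolutions~(\ref{eq:sigma_p}) and subtract them to obtain the Sylvester equation~(\ref{eq:Sigma_p-po}) for $\Sigma_p-\Sigma_{po}$. Uniqueness of the solution is guaranteed because the spectra of $A_p$ and $-(A_p-KC_p)^T$ lie in disjoint open half-planes (the former in the open left, the latter in the open right). Imposing $e_\Sigma=0$ in $\dot\Sigma_o=0$ then yields~(\ref{eq:B_o}), which fixes $B_oB_o^T$ in terms of the plant data, $K$ and the Sylvester solution. The first physical-realizability relation in~(\ref{eq:MHOPR}) simultaneously fixes $B_o\Theta_{n_{w_o}}B_o^T$; a term-by-term match then shows that the left-hand side of~(\ref{ineq:thsteady}) equals $B_oB_o^T+iB_o\Theta_{n_{w_o}}B_o^T = B_o(I+i\Theta_{n_{w_o}})B_o^T$, so~(\ref{ineq:thsteady}) is equivalent to $B_o(I+i\Theta_{n_{w_o}})B_o^T\succeq 0$.

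To finish, I would unpack $B_o=2i\Theta_{n_x}[-\Lambda_o^\dagger,\Lambda_o^T]\Gamma_{n_{w_o}}$ and use the block structure of $\Gamma_{n_{w_o}}$ (a permuted Kronecker product built from the $2\times2$ matrix $M$ defined in Section~\ref{sec:OHOALQ}) to verify the identity $\tfrac{1}{4}B_o(I+i\Theta_{n_{w_o}})B_o^T = \Theta_{n_x}\Lambda_o^\dagger\Lambda_o\Theta_{n_x}^{-1}$. Because $\Theta_{n_x}$ is real orthogonal ($\Theta_{n_x}^T\Theta_{n_x}=-\Theta_{n_x}^2=I$), this congruence preserves Hermiticity and positive semi-definiteness, so~(\ref{ineq:thsteady}) is equivalent to $\Lambda_o^\dagger\Lambda_o\succeq 0$. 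Solving that identity for $\Lambda_o^\dagger\Lambda_o$ gives the closed-form expression displayed in the theorem, and any complex Hermitian square root of that matrix supplies an admissible $\Lambda_o$.

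The main obstacle will be verifying the identity linking $B_o$ and $\Lambda_o$: it requires careful manipulation of the permutation structure inside $\Gamma_{n_{w_o}}$ together with the computation that $M(I+iJ)M^\dagger$ is a rank-one projector, so that the combination $I+i\Theta_{n_{w_o}}$ projects $[-\Lambda_o^\dagger,\Lambda_o^T]$ onto the block that reassembles $\Lambda_o^\dagger\Lambda_o$. Once this identity is established, the necessity of~(\ref{ineq:thsteady}) and the sufficiency via explicit construction of $\Lambda_o$ both follow, and condition~1 together with the physical realizability of the resulting $(K,B_o,C_o)$ closes out the equivalence.
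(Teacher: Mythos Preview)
Your proposal is correct and follows essentially the same route as the paper: reduce to steady state, read off the two algebraic constraints on $B_oB_o^T$ and $B_o\Theta_{n_{w_o}}B_o^T$ from covariance tracking and physical realizability, and translate them into a positivity condition on $\Lambda_o^\dagger\Lambda_o$ via the parametrization $B_o=2i\Theta_{n_x}[-\Lambda_o^\dagger,\Lambda_o^T]\Gamma_{n_{w_o}}$. The only difference is organizational---the paper extracts $\Re(\Lambda_o^\dagger\Lambda_o)$ and $\Im(\Lambda_o^\dagger\Lambda_o)$ separately from $B_oB_o^T$ and $B_o\Theta_{n_{w_o}}B_o^T$ and then combines them, whereas you bundle both constraints into the single Hermitian form $B_o(I+i\Theta_{n_{w_o}})B_o^T$ before relating it to $\Theta_{n_x}\Lambda_o^\dagger\Lambda_o\Theta_{n_x}^{-1}$---and you additionally supply the Sylvester-uniqueness argument that the paper asserts in the statement but does not prove.
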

\begin{proof}
Since the plant described by Eqs.~(\ref{eq:qsys}) is detectable, one can always find $K$ to make $A_p-KC_p$ Hurwitz. With the assumption of $A_p$ being Hurwitz, and according to Eq.~(\ref{eq:B_o}), $B_o$ must satisfy
\begin{align}
B_oB_o^T=&KC_p\left(\Sigma_{p}-\Sigma_{po}\right)+\left(\Sigma_{p}-\Sigma_{po}^T\right)\left(KC_p\right)^{T}\nonumber\\
&+B_pB_p^{T}-KK^{T},
\label{eq:solveBoCMT}
\end{align}
and the corresponding physically realizability condition is
\begin{align}
B_o\Theta_{n_{w_o}}B_o^{T}=&-\left(A_p-KC_p\right)\Theta_{n_x}-\Theta_{n_x}\left(A_p-KC_p\right)^{T}\nonumber\\
&-K\Theta_{n_{y_p}}K^{T}
\label{eq:solveBoPR}
\end{align}
Therefore $B_o$ can be determined based on Eqs. (\ref{eq:solveBoCMT}) and (\ref{eq:solveBoPR}).

In accordance with the physical form of an open harmonic oscillator described by Eqs.~(\ref{eq:qobs}) with $L_o=\Lambda_ox_o$, $B_o$ is given by \cite{JNP08,MJ12,MEPUJ13})
\begin{align}
B_o=2i\Theta_{n_{x}}\left[\begin{array}{cc}
-\Lambda_o^{\dagger} & \Lambda_o^{T}\end{array}\right]\Gamma_{n_{w_o}}.
\label{eq:Boform}
\end{align}
Here $\Gamma_{n_{w_o}}$ is defined in Section \ref{sec:OHOALQ}.

By using the form of $B_o$ given in Eq.~(\ref{eq:Boform}), we can obtain that
\begin{align}
B_oB_o^T = -4\Theta_{n_x}\Re\left(\Lambda_{0}^{\dagger}\Lambda_{0}\right)\Theta_{n_x},
\end{align}
then
\begin{align}
\Re\left(\Lambda_o^{\dagger}\Lambda_o\right) =&-\frac{1}{4}\Theta_{n_{x}}KC_p\left(\Sigma_p-\Sigma_{po}\right)\Theta_{n_{x}}\nonumber\\
&-\frac{1}{4}\Theta_{n_x}\left(\Sigma_p-\Sigma_{po}\right)^T\left(KC_p\right)^T\Theta_{n_x}\nonumber\\
&-\frac{1}{4}\Theta_{n_x}B_pB_p^T\Theta_{n_x}+\frac{1}{4}\Theta_{n_x}KK^T\Theta_{n_x}\end{align}
due to Eq.~(\ref{eq:solveBoCMT}).

Similarly, we have
\begin{align}
B_o\Theta_{n_{w_o}}B_o^{T} = 4i\Theta_{n_x}\Im\left(\Lambda_o^{\dagger}\Lambda_o\right)\Theta_{n_x},
\end{align}
then
\begin{align}
\Im\left(\Lambda_o^{\dagger}\Lambda_o\right) =& -\frac{i}{4}\Theta_{n_{_{x}}}\left(A-KC_p\right)-\frac{i}{4}\left(A-KC_p\right)^{T}\Theta_{n_{x}}\nonumber\\
&+\frac{i}{4}\Theta_{n_{x}}K\Theta_{n_{y_p}}K^{T}\Theta_{n_{x}}
\end{align}
based on Eq.~(\ref{eq:solveBoPR}).

Therefore, $\Lambda_o$ is any $\frac{n_{w_o}}{2}\times n_{x}$ complex matrix such that\begin{align}
\Lambda_o^{\dagger}\Lambda_o=&-\frac{i}{4}\Theta_{n_{_{x}}}\left(A-KC_p\right)-\frac{i}{4}\left(A-KC_p\right)^{T}\Theta_{n_{x}}\nonumber\\
&+\frac{i}{4}\Theta_{n_{x}}K\Theta_{n_{y_p}}K^{T}\Theta_{n_{x}}\nonumber\\
&-\frac{1}{4}\Theta_{n_{x}}KC_p\left(\Sigma_p-\Sigma_{po}\right)\Theta_{n_{x}}\nonumber\\
&-\frac{1}{4}\Theta_{n_x}\left(\Sigma_p-\Sigma_{po}\right)^T\left(KC_p\right)^T\Theta_{n_x}\nonumber\\
&-\frac{1}{4}\Theta_{n_x}B_pB_p^T\Theta_{n_x}+\frac{1}{4}\Theta_{n_x}KK^T\Theta_{n_x}\nonumber\\
&\succeq0\nonumber
\end{align}
and vice versa. Eq.~(\ref{ineq:thsteady}) can then be derived using the identity $-\Theta_{n_x} \Theta_{n_x} = I_{n_x}$.
\end{proof}

As studied in \cite{MJ12}, a MT coherent quantum observer can always be found if the plant described by Eqs.~(\ref{eq:qsys}) is detectable. However, as we intend to track the covariance matrix of a linear quantum plant using coherent observers at the same time, not all values of $K$ that make $A_p-KC_p$ Hurwitz are applicable to the design of a CMT coherent observer. Indeed, there are systems where mean value coherent observers exist but CMT observers can not be constructed. It is worth mentioning that $B_o$ can be $0$ if no additional noise is needed to ensure the physical realizability of an observer described by Eqs.~\eqref{eq:qobs} .

\section{Applications and examples}
\label{QCANS}
In this section, we present some numerical examples to illustrate the design and performance of CMT coherent quantum observers. We also compare the behavior of an MT vs. CMT observer.

\subsection{CMT observers vs. MT observers for a single-mode quantum harmonic oscillator}
In this example we consider tracking a single-mode Gaussian system. Consider an optical parametric oscillator as the linear quantum plant given by
\begin{subequations}
\label{eq:ex2sys}
\begin{align}
&dx_p=\left[\begin{array}{cc}-0.4 & 0\\0 & -0.6\end{array}\right]x_pdt-dw_p\\
&dy_p=x_pdt+dw_p
\end{align}
\end{subequations}
where $A_p=\left[\begin{array}{cc}-0.4 & 0\\0 & -0.6\end{array}\right]$, $B_p=-I_2$ and $C_p=D_p=I_2$.

If we choose $K$ to be $3I_2$, then using Eq.~\eqref{eq:solveBoPR} one can choose $B_o = \left[\begin{array}{cc}
1 & 0\\
0 & -2
\end{array}\right] $ to construct a MT coherent quantum observer.

However, in this case, according to Eq.~(\ref{eq:solveBoCMT}) we have
\begin{align}
B_oB_o^T=\left[\begin{array}{cc}
-1.6842 & 0\\
0 & -2.2857
\end{array}\right]
\end{align}
which is negative,  and therefore a CMT coherent observer cannot be designed with $K=3I_2$.

Alternatively, one can set $K=I_2$. First, we can calculate the steady state $\Sigma_p - \Sigma_{po}=\left[\begin{array}{cc} 1.1111 & 0\\ 0 & 0.9091 \end{array}\right]$ using Eq.~(\ref{eq:Sigma_p-po}). Then by substituting $K$ and $\Sigma_p - \Sigma_{po}$ to  Eq.~(\ref{ineq:thsteady}), we find the Eq.~(\ref{ineq:thsteady}) holds. Applying the Cholesky decomposition, one can determine
\begin{align*}
\Lambda_{0}=\left[\begin{array}{cc}
0.6742 & 0.7416i\\
0 & 0.0745
\end{array}\right].
\end{align*}
It is thus that
\begin{align*}
B_o = &2i\Theta_2\left[\begin{array}{cc}
-\Lambda_{0}^{\dagger} & \Lambda_{0}^{T}\end{array}\right]\Gamma_4 \\
=&\left[\begin{array}{cccc}
-1.4832 & 0 & 0 & 0.1491\\
0 & -1.3484 & 0 & 0
\end{array}\right].
\end{align*}
 Also, we choose the initial covariance matrix for the joint plant-observer system as
\begin{align*}
\Sigma\left(0\right)=\left[\begin{array}{cc}
1.1I_{2} & 0\\
0 & 2I_{2}
\end{array}\right]
\end{align*}
which corresponds to a Gaussian separable joint state \cite{AI05}. The initial amplitudes are $\mu_p\left(0\right) =\left[\begin{array}{cc} 1 & 1\end{array}\right]^{T}$ and $\mu_o\left(0\right) =\left[\begin{array}{cc} 0 & 0\end{array}\right]^{T}$.

We can calculate $\Sigma_p\left(t\right)$ and $\Sigma_o\left(t\right)$ explicitly by using the Laplace transform, and
\begin{align*}
e_{\Sigma}\left(t\right) &= \Sigma_p\left(t\right)-\Sigma_o\left(t\right)\\
&=\left[\begin{array}{cc}
-\frac{2}{9}e^{-\frac{9}{5}t}-\frac{7}{9}e^{-\frac{14}{5}t} & 0\\
0 & \frac{2}{11}e^{-\frac{11}{5}t}-\frac{13}{11}e^{-\frac{16}{5}t}
\end{array}\right].
\end{align*}
\begin{figure}[!htp]
\centering
\includegraphics[scale=.45]{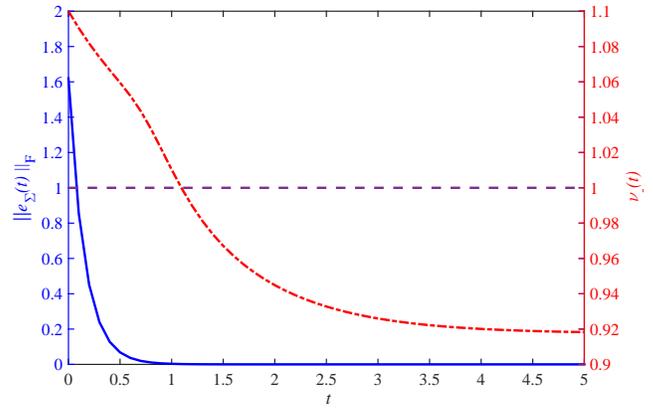}
\caption{Plot of the behavior of a CMT observer with the solid line and the dash-dot line corresponding to $||e_{\Sigma}(t)||_F$ and $\nu_-\left(t\right)$ (see Eq.~(13) in \cite{AI05}) respectively. The joint system is initialized in a Gaussian separable state.}
\label{CMT_normalex}
\end{figure}

We investigate the convergence of the covariance matrices between the plants and the coherent observers by plotting the Frobenius matrix norm of the covariance error matrix $||e_{\Sigma}(t)||_F = \sqrt{\mbox{Tr}[e_{\Sigma}^2 ]}$ against time in Fig. \ref{CMT_normalex}. When $||e_{\Sigma}(t)||_F = 0$ we can be certain $e_{\Sigma}(t)=0$, and hence the covariance matrices of the plant and observer are identical. We can see the CMT observer is performing as expected. The matrix $\Sigma_{o}\left(t\right)$ is tracking $\Sigma_{p}\left(t\right)$ asymptotically as time goes to infinity.

We also investigate the quantum correlations between the plant and the CMT observer. For Gaussian two-mode systems, entanglement is completely quantified by the smallest symplectic eigenvalue $\nu_-\left(t\right)$ of the partially transposed state, and the joint state is entangled if and only if $\nu_-\left(t\right)<1$ \cite{AI05,MJ12}. In Fig. \ref{CMT_normalex} we plot the smaller symplectic eigenvalue as a function of time. We find that the plant and the CMT observer eventually become entangled as depicted by the dash-dot line in Fig. \ref{CMT_normalex}.

As the CMT observer tracks both the first and second moments of the plant, and the quantum state is Gaussian, we expect the quantum state of the CMT observer to be identical to that of the plant. This is not guaranteed to be the case for the MT observer that only tracks the means. We compare the performance of the CMT and MT observer in this regard by plotting the quantum fidelity between the observer and the plant as a function of time in Fig. \ref{CMT_normalex_fidelity}. Quantum fidelity is widely used to quantify how close two mixed states are \cite{AI08,MM12}. For Gaussian states, the fidelity between two states can be calculated analytically (see Eq. (7) in \cite{AI08}). In this paper, we use $F\left(t\right)$ to denote the fidelity, and the closer $F\left(t\right)$ is to $1$ the more similar the two sates are to each other. In Fig. \ref{CMT_normalex_fidelity}, the state of a CMT observer (with $B_o = \left[\begin{array}{cccc} -1.4832 & 0 & 0 & 0.1491\\ 0 & -1.3484 & 0 & 0 \end{array}\right]$) gets closer to the plant state compared to a MT coherent observer (with $B_o = \left[\begin{array}{cc} 1 & 0\\ 0 & 2 \end{array}\right]$), as anticipated.
\begin{figure}[!htp]
\centering
\includegraphics[scale=.45]{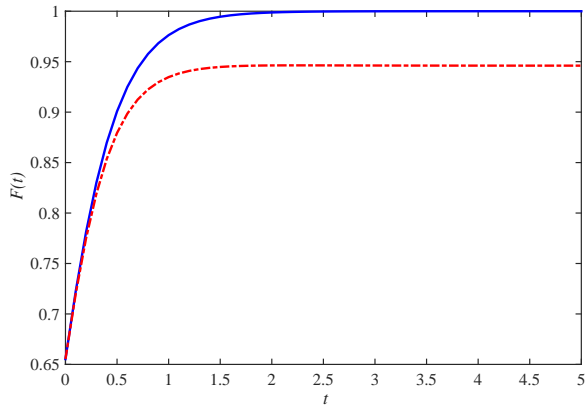}
\caption{Plot of the fidelity $F\left(t\right)$ (see Eq.~(7) in \cite{AI08}) as a function of time with the solid line and the dash-dot line corresponding to a CMT observer and a MT observer respectively.}
\label{CMT_normalex_fidelity}
\end{figure}

\subsection{Entanglement tracking of a two-mode quantum harmonic oscillator using a CMT observer}
In this example we consider a linear quantum plant which consists of two oscillators that are initially separable but eventually become entangled. The initial covariance matrix for the plant is $\Sigma_p\left(0\right)=\left[\begin{array}{cc}
1.1I_{2} & 0\\
0 & 2I_{2}
\end{array}\right]$ and its evolution is governed by the linear QSDEs
\begin{subequations}
\label{eq:sysenp}
\begin{align}
dx_{p}=&\left[\begin{array}{cccc}
-0.4 & 0 & 0 & 0\\
0 & -0.6 & 0 & 0\\
1 & 0 & -1.4 & 0\\
0 & 1 & 0 & -1.6
\end{array}\right]x_{p}dt\nonumber\\
&+\left[\begin{array}{cccc}
-1 & 0 & 0 & 0\\
0 & -1 & 0 & 0\\
1 & 0 & 1 & 0\\
0 & 1 & 0 & 2
\end{array}\right]dw_{p},\\
dy_{p}=&\left[\begin{array}{cccc}
1 & 0 & -1 & 0\\
0 & 1 & 0 & -1\\
0 & 0 & -2 & 0\\
0 & 0 & 0 & -1
\end{array}\right]x_{p}dt+dw_{p}.
\end{align}
\end{subequations}
We now design a CMT coherent observer for the plant described by Eqs.~(\ref{eq:sysenp}). One can choose the observer gain $K$ to be
\begin{align}
K=\left[\begin{array}{cccc}
0.2 & 0 & -0.1 & 0\\
0 & 0.05 & 0 & -0.1\\
0.6 & 0 & -0.1 & 0\\
0 & 0.4 & 0 & -0.1
\end{array}\right].
\end{align}
Then we find that Eq.~(\ref{ineq:thsteady}) holds, and thus a CMT observer can be constructed according to Theorem \ref{thsteady} with ($\Lambda_{o}$ is not unique)
\begin{align*}
\Lambda_{o}=\left[\begin{array}{cccc}
0.5167 & 0.5952i & -0.2914 & -0.1887i\\
0 & 0.0571 & -0.0167i & 0.1343\\
0 & 0 & 0.9316 & 0.4887i\\
0 & 0 & 0 & 0.027
\end{array}\right].
\end{align*}
We initialize the observer to $\Sigma_o\left(0\right)=2I_4$, which is different to the plant initial condition, but still separable.

We now confirm that the entanglement between the oscillators of the plant is correctly tracked by the CMT observer. In Fig. \ref{CMtracking_entanglement} we plot the smallest symplectic eigenvalue of the partially transposed state of both the observer $\nu_{-}^{o}\left(t\right)$ and the plant $\nu_{-}^{p}\left(t\right)$ as a function of time.  $\nu_{-}^{o}\left(t\right)$ converges to $\nu_{-}^{p}\left(t\right)$ asymptotically, as expected. This confirms that even quantum correlations inside the two-mode Gaussian plant can be tracked by the CMT observer. This allows for control of the plant based on quantum characteristics that were unavailable with a MT observer.  
\begin{figure}[!htp]
\centering
\includegraphics[scale=.45]{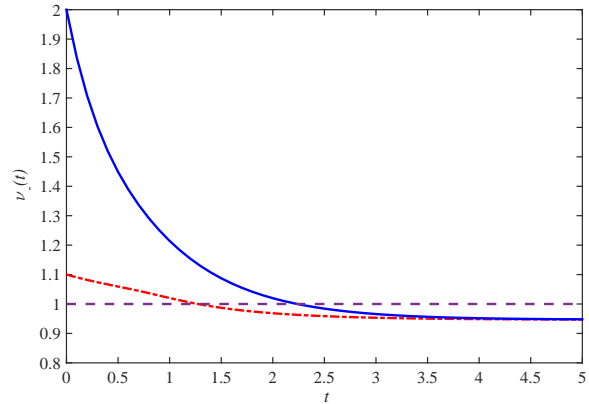}
\caption{Plot of an entanglement measure (smallest symplectic eigenvalue) of the observer $\nu_{-}^{o}\left(t\right)$ (the solid line) and the plant $\nu_{-}^{p}\left(t\right)$ (the dash-dot line) as a function of time.}
\label{CMtracking_entanglement}
\end{figure}

\subsection{Failed tracking of the covariance matrix of a singe-mode quantum harmonic oscillator}

Consider a plant with the following linear QSDEs
\begin{subequations}
\label{eq:sysneverCMT}
\begin{align}
dx_{p}&=\left[\begin{array}{cc}
-1 & 1\\
1 & -1
\end{array}\right]x_{p}dt+\left[\begin{array}{cc}
-\sqrt{2} & 0\\
0 & -\sqrt{2}
\end{array}\right]dw_{p},\\
dy_{p}&=\left[\begin{array}{cc}
\sqrt{2} & 0\\
0 & \sqrt{2}
\end{array}\right]x_{p}dt+dw_{p}
\end{align}
\end{subequations}
For the quantum plant given by Eqs.~(\ref{eq:sysneverCMT}) (Note that $A_p$ is not Hurwitz), no matter what values we choose $K$ and $B_o$ to be, Eq.~(\ref{eq:secmom}) cannot be satisfied. It is thus that a CMT coherent quantum observer can never be designed for the plant given by Eqs.~(\ref{eq:sysneverCMT}).

There do exist plants which cannot be tracked by a MT observer. For instance, certain undetectable plants cannot be tracked even in the mean values sense. However, in this example, a MT observer can be constructed even though a CMT observer cannot be. Specifically, we can choose $K = I_2$ to make  $A_p-KC_p$ Hurwitz, and then $B_o$ is determined as shown in \cite{MJ12}. Therefore, in this case, we are only able to approach the mean values of the quantum plant without tracking its covariance matrix. This demonstrates that there are additional constraints when constructing a CMT observer compared to a MT observer.

\section{Discussion and Outlook}
\label{sec:D}
We have created a CMT observer that tracks both the mean values and covariances of a system of linear quantum oscillators. We first emphasize that the CMT observer (and the previously developed MT observer) does not just have the same steady state as the system it is tracking, but it is also correlated to the plant's state. This is shown in Fig. 2 for the CMT observer, where we see it is entangled with the plant.

In future work it will be investigated whether an observer can be used in a feedback loop to better control the behavior of a plant than would otherwise be possible with direct feedback. It has already been shown that coherent feedback has advantages over measurement-based feedback. We now expect that coherent observer mediated feedback will have an advantage over direct feedback in the same sense a classical observer mediated feedback has an advantage over direct feedback.

Furthermore, we expect that the CMT observer will be more useful than a MT observer as it tracks the covariance matrix of the plant as well. Important properties of a quantum system, such as its energy, entanglement and other quantum correlations are a function of the system's covariance matrix rather than merely its mean values. 

Our work on a CMT coherent observer also raises some interesting fundamental questions with regard to engineering quantum systems in comparison to classical system. Classifying what plants can or cannot be tracked with a MT observer appears to be identical to classical observer theory. Namely a $K$ must be found such that $A_p-KC_p$ is Hurwitz. There is well established classical theory which then relates this requirement to notions such as observability and detectability \cite{KS72,AM79,GE02}. A CMT observer, on the other hand, has additional requirements which are fundamentally quantum in origin. Namely, Eq.~(\ref{ineq:thsteady}) must be satisfied in addition to $A_p-KC_p$ being Hurwitz. It remains an open question on how to interpret this additional requirement and if the classical notions of observability and detectability can be appropriately extended when discussing the tracking of a quantum plant's covariance matrix. As we are attempting to copy the entire quantum state of the plant with a CMT observer (unlike a MT observer), there may be some connection between these additional requirements and the no-cloning theorem \cite{SLGA05,KZ09}.

Outside of quantum engineering, the design and implementation of a CMT observer also looks to provide some insight into quantum measurement. When the output of the plant is measured, an optimal estimate of the quantum state of the plant can be calculated using the Belavkin-Kalman filter (also referred to as stochastic trajectories) \cite{WM10}. However, research suggests the situation becomes much more complicated when there is no measurement step. It has been proven the Belavkin-Kalman filter fails in the presence of a fully quantum non-commutative output signal \cite{Amini:2014,VPB93,BHJ07} and furthermore measurement-based Kalman filters are challenging to be realized efficiently with quantum hardware \cite{Hush:2013a}. The CMT observer is the first coherent method of providing an estimate of the full quantum state of a plant. Note we never invoked the measurement postulate when deriving the CMT observer. It is entirely derived in the framework of open systems. Creating and better understanding estimators for quantum systems which do not explicitly require the measurement postulate is an important part of further refining our understanding of quantum measurement. 

\begin{acknowledgments} 

ZM and MJ acknowledges support by the Australian Research Council Centre of Excellence for Quantum Computation and Communication Technology (project number CE110001027) and Air Force Office of Scientific Research (project number FA2386-09-1-4089). MH acknowledges funding from Australian Research Council Discovery Project (project number DP110102322).

\end{acknowledgments} 

\appendix

\section{Notation}
\label{sec:app}
In this paper the asterisk is used to indicate the Hilbert space adjoint $X^{\star}$ of an operator $X$, as well as the complex conjugate $z^{\star}=x-iy$ of a complex number $z=x+iy$ (here, $i=\sqrt{-1}$ and $x,y$ are real). Real and imaginary parts are denoted by $\Re\left(z\right)=\frac{z+z^{\star}}{2}$ and $\Im\left(z\right)=\frac{z-z^{\star}}{2i}$ respectively. The conjugate transpose $A^\dagger$ of a matrix $A=\left\{ a_{ij}\right\} $ is defined by $A^{\dagger}=\left\{ a_{ji}^{\star}\right\} $. The conjugate  $A^{\sharp}=\left\{ a_{ij}^{\star}\right\} $ and the transpose $A^{T}=\left\{ a_{ji}\right\} $ of a matrix is defined so that $A^{\dagger}=\left(A^T\right)^{\sharp}=\left(A^{\sharp}\right)^T$. $\textrm{det}\left(A\right)$ denotes the determinant of a matrix $A$, and $\textrm{tr}\left(A\right)$ represents the trace of $A$.  vec$\left(A\right)$ denotes the vectorization of a matrix $A$. $\left\Vert A\right\Vert _{F}$ denotes the Frobenius norm, i.e., $\left\Vert A\right\Vert _{F}=\sqrt{\mathrm{tr}\left(A^{\dagger}A\right)}$. The mean value (quantum expectation) of an operator $X$ in the state $\rho$ is denoted by $\left\langle X \right\rangle = \mathbb{E}_{\rho}\left[X\right]=\textrm{tr}\left(\rho X\right)$. The commutator of two operators $X,Y$ is defined by $\left[X,Y\right]=XY-YX$. The anticommutator of two operators $X,Y$ is defined by $\left\{X,Y\right\}=XY+YX$. The tensor product of operators $X,Y$ defined on Hilbert spaces $\mathbb{H},\mathbb{G}$ is denoted $X \otimes Y$, and is defined on the tensor product Hilbert space $\mathbb{H}\otimes\mathbb{G}$. $I_n$ ($n\in\mathbb{N}$) denotes the $n$ dimensional identity matrix. $0_n$ ($n\in\mathbb{N}$) denotes the $n$ dimensional zero matrix.

\bibliography{CMTreferences}

\begin{thebibliography}{41}%
\makeatletter
\providecommand \@ifxundefined [1]{%
 \@ifx{#1\undefined}
}%
\providecommand \@ifnum [1]{%
 \ifnum #1\expandafter \@firstoftwo
 \else \expandafter \@secondoftwo
 \fi
}%
\providecommand \@ifx [1]{%
 \ifx #1\expandafter \@firstoftwo
 \else \expandafter \@secondoftwo
 \fi
}%
\providecommand \natexlab [1]{#1}%
\providecommand \enquote  [1]{``#1''}%
\providecommand \bibnamefont  [1]{#1}%
\providecommand \bibfnamefont [1]{#1}%
\providecommand \citenamefont [1]{#1}%
\providecommand \href@noop [0]{\@secondoftwo}%
\providecommand \href [0]{\begingroup \@sanitize@url \@href}%
\providecommand \@href[1]{\@@startlink{#1}\@@href}%
\providecommand \@@href[1]{\endgroup#1\@@endlink}%
\providecommand \@sanitize@url [0]{\catcode `\\12\catcode `\$12\catcode
  `\&12\catcode `\#12\catcode `\^12\catcode `\_12\catcode `\%12\relax}%
\providecommand \@@startlink[1]{}%
\providecommand \@@endlink[0]{}%
\providecommand \url  [0]{\begingroup\@sanitize@url \@url }%
\providecommand \@url [1]{\endgroup\@href {#1}{\urlprefix }}%
\providecommand \urlprefix  [0]{URL }%
\providecommand \Eprint [0]{\href }%
\providecommand \doibase [0]{http://dx.doi.org/}%
\providecommand \selectlanguage [0]{\@gobble}%
\providecommand \bibinfo  [0]{\@secondoftwo}%
\providecommand \bibfield  [0]{\@secondoftwo}%
\providecommand \translation [1]{[#1]}%
\providecommand \BibitemOpen [0]{}%
\providecommand \bibitemStop [0]{}%
\providecommand \bibitemNoStop [0]{.\EOS\space}%
\providecommand \EOS [0]{\spacefactor3000\relax}%
\providecommand \BibitemShut  [1]{\csname bibitem#1\endcsname}%
\let\auto@bib@innerbib\@empty
\bibitem [{\citenamefont {Sayrin}\ \emph {et~al.}(2011)\citenamefont {Sayrin},
  \citenamefont {Dotsenko}, \citenamefont {Zhou}, \citenamefont {Peaudecerf},
  \citenamefont {Rybarczyk}, \citenamefont {Gleyzes}, \citenamefont {Rouchon},
  \citenamefont {Mirrahimi}, \citenamefont {Amini}, \citenamefont {Brune},
  \citenamefont {Raimond},\ and\ \citenamefont {Haroche}}]{Sayrin:2011}%
  \BibitemOpen
  \bibfield  {author} {\bibinfo {author} {\bibfnamefont {C.}~\bibnamefont
  {Sayrin}}, \bibinfo {author} {\bibfnamefont {I.}~\bibnamefont {Dotsenko}},
  \bibinfo {author} {\bibfnamefont {X.}~\bibnamefont {Zhou}}, \bibinfo {author}
  {\bibfnamefont {B.}~\bibnamefont {Peaudecerf}}, \bibinfo {author}
  {\bibfnamefont {T.}~\bibnamefont {Rybarczyk}}, \bibinfo {author}
  {\bibfnamefont {S.}~\bibnamefont {Gleyzes}}, \bibinfo {author} {\bibfnamefont
  {P.}~\bibnamefont {Rouchon}}, \bibinfo {author} {\bibfnamefont
  {M.}~\bibnamefont {Mirrahimi}}, \bibinfo {author} {\bibfnamefont
  {H.}~\bibnamefont {Amini}}, \bibinfo {author} {\bibfnamefont
  {M.}~\bibnamefont {Brune}}, \bibinfo {author} {\bibfnamefont {J.-M.}\
  \bibnamefont {Raimond}}, \ and\ \bibinfo {author} {\bibfnamefont
  {S.}~\bibnamefont {Haroche}},\ }\href {http://dx.doi.org/10.1038/nature10376}
  {\bibfield  {journal} {\bibinfo  {journal} {Nature}\ }\textbf {\bibinfo
  {volume} {477}},\ \bibinfo {pages} {73} (\bibinfo {year} {2011})}\BibitemShut
  {NoStop}%
\bibitem [{\citenamefont {Ruskov}\ \emph {et~al.}(2005)\citenamefont {Ruskov},
  \citenamefont {Schwab},\ and\ \citenamefont {Korotkov}}]{Ruskov:2005}%
  \BibitemOpen
  \bibfield  {author} {\bibinfo {author} {\bibfnamefont {R.}~\bibnamefont
  {Ruskov}}, \bibinfo {author} {\bibfnamefont {K.}~\bibnamefont {Schwab}}, \
  and\ \bibinfo {author} {\bibfnamefont {A.~N.}\ \bibnamefont {Korotkov}},\
  }\href {\doibase 10.1103/PhysRevB.71.235407} {\bibfield  {journal} {\bibinfo
  {journal} {Phys. Rev. B}\ }\textbf {\bibinfo {volume} {71}},\ \bibinfo
  {pages} {235407} (\bibinfo {year} {2005})}\BibitemShut {NoStop}%
\bibitem [{\citenamefont {Hopkins}\ \emph {et~al.}(2003)\citenamefont
  {Hopkins}, \citenamefont {Jacobs}, \citenamefont {Habib},\ and\ \citenamefont
  {Schwab}}]{Hopkins:2003}%
  \BibitemOpen
  \bibfield  {author} {\bibinfo {author} {\bibfnamefont {A.}~\bibnamefont
  {Hopkins}}, \bibinfo {author} {\bibfnamefont {K.}~\bibnamefont {Jacobs}},
  \bibinfo {author} {\bibfnamefont {S.}~\bibnamefont {Habib}}, \ and\ \bibinfo
  {author} {\bibfnamefont {K.}~\bibnamefont {Schwab}},\ }\href {\doibase
  10.1103/PhysRevB.68.235328} {\bibfield  {journal} {\bibinfo  {journal} {Phys.
  Rev. B}\ }\textbf {\bibinfo {volume} {68}},\ \bibinfo {pages} {235328}
  (\bibinfo {year} {2003})}\BibitemShut {NoStop}%
\bibitem [{\citenamefont {Hush}\ \emph
  {et~al.}(2013{\natexlab{a}})\citenamefont {Hush}, \citenamefont {Szigeti},
  \citenamefont {Carvalho},\ and\ \citenamefont {Hope}}]{Hush:2013}%
  \BibitemOpen
  \bibfield  {author} {\bibinfo {author} {\bibfnamefont {M.~R.}\ \bibnamefont
  {Hush}}, \bibinfo {author} {\bibfnamefont {S.~S.}\ \bibnamefont {Szigeti}},
  \bibinfo {author} {\bibfnamefont {A.~R.~R.}\ \bibnamefont {Carvalho}}, \ and\
  \bibinfo {author} {\bibfnamefont {J.~J.}\ \bibnamefont {Hope}},\ }\href
  {http://stacks.iop.org/1367-2630/15/i=11/a=113060} {\bibfield  {journal}
  {\bibinfo  {journal} {New Journal of Physics}\ }\textbf {\bibinfo {volume}
  {15}},\ \bibinfo {pages} {113060} (\bibinfo {year}
  {2013}{\natexlab{a}})}\BibitemShut {NoStop}%
\bibitem [{\citenamefont {Bushev}\ \emph {et~al.}(2006)\citenamefont {Bushev},
  \citenamefont {Rotter}, \citenamefont {Wilson}, \citenamefont {Dubin},
  \citenamefont {Becher}, \citenamefont {Eschner}, \citenamefont {Blatt},
  \citenamefont {Steixner}, \citenamefont {Rabl},\ and\ \citenamefont
  {Zoller}}]{Bushev:2006}%
  \BibitemOpen
  \bibfield  {author} {\bibinfo {author} {\bibfnamefont {P.}~\bibnamefont
  {Bushev}}, \bibinfo {author} {\bibfnamefont {D.}~\bibnamefont {Rotter}},
  \bibinfo {author} {\bibfnamefont {A.}~\bibnamefont {Wilson}}, \bibinfo
  {author} {\bibfnamefont {F.~m.~c.}\ \bibnamefont {Dubin}}, \bibinfo {author}
  {\bibfnamefont {C.}~\bibnamefont {Becher}}, \bibinfo {author} {\bibfnamefont
  {J.}~\bibnamefont {Eschner}}, \bibinfo {author} {\bibfnamefont
  {R.}~\bibnamefont {Blatt}}, \bibinfo {author} {\bibfnamefont
  {V.}~\bibnamefont {Steixner}}, \bibinfo {author} {\bibfnamefont
  {P.}~\bibnamefont {Rabl}}, \ and\ \bibinfo {author} {\bibfnamefont
  {P.}~\bibnamefont {Zoller}},\ }\href {\doibase 10.1103/PhysRevLett.96.043003}
  {\bibfield  {journal} {\bibinfo  {journal} {Phys. Rev. Lett.}\ }\textbf
  {\bibinfo {volume} {96}},\ \bibinfo {pages} {043003} (\bibinfo {year}
  {2006})}\BibitemShut {NoStop}%
\bibitem [{\citenamefont {Thomsen}\ \emph {et~al.}(2002)\citenamefont
  {Thomsen}, \citenamefont {Mancini},\ and\ \citenamefont
  {Wiseman}}]{Thomsen:2002}%
  \BibitemOpen
  \bibfield  {author} {\bibinfo {author} {\bibfnamefont {L.~K.}\ \bibnamefont
  {Thomsen}}, \bibinfo {author} {\bibfnamefont {S.}~\bibnamefont {Mancini}}, \
  and\ \bibinfo {author} {\bibfnamefont {H.~M.}\ \bibnamefont {Wiseman}},\
  }\href {\doibase 10.1103/PhysRevA.65.061801} {\bibfield  {journal} {\bibinfo
  {journal} {Phys. Rev. A}\ }\textbf {\bibinfo {volume} {65}},\ \bibinfo
  {pages} {061801} (\bibinfo {year} {2002})}\BibitemShut {NoStop}%
\bibitem [{\citenamefont {Stockton}\ \emph {et~al.}(2004)\citenamefont
  {Stockton}, \citenamefont {Geremia}, \citenamefont {Doherty},\ and\
  \citenamefont {Mabuchi}}]{Stockton:2004}%
  \BibitemOpen
  \bibfield  {author} {\bibinfo {author} {\bibfnamefont {J.~K.}\ \bibnamefont
  {Stockton}}, \bibinfo {author} {\bibfnamefont {J.~M.}\ \bibnamefont
  {Geremia}}, \bibinfo {author} {\bibfnamefont {A.~C.}\ \bibnamefont
  {Doherty}}, \ and\ \bibinfo {author} {\bibfnamefont {H.}~\bibnamefont
  {Mabuchi}},\ }\href {\doibase 10.1103/PhysRevA.69.032109} {\bibfield
  {journal} {\bibinfo  {journal} {Phys. Rev. A}\ }\textbf {\bibinfo {volume}
  {69}},\ \bibinfo {pages} {032109} (\bibinfo {year} {2004})}\BibitemShut
  {NoStop}%
\bibitem [{\citenamefont {Hush}\ \emph
  {et~al.}(2013{\natexlab{b}})\citenamefont {Hush}, \citenamefont {Carvalho},
  \citenamefont {Hedges},\ and\ \citenamefont {James}}]{HRHJ13}%
  \BibitemOpen
  \bibfield  {author} {\bibinfo {author} {\bibfnamefont {M.~R.}\ \bibnamefont
  {Hush}}, \bibinfo {author} {\bibfnamefont {A.~R.~R.}\ \bibnamefont
  {Carvalho}}, \bibinfo {author} {\bibfnamefont {M.}~\bibnamefont {Hedges}}, \
  and\ \bibinfo {author} {\bibfnamefont {M.~R.}\ \bibnamefont {James}},\
  }\href@noop {} {\bibfield  {journal} {\bibinfo  {journal} {New Journal of
  Physics}\ }\textbf {\bibinfo {volume} {15}},\ \bibinfo {pages} {085020}
  (\bibinfo {year} {2013}{\natexlab{b}})}\BibitemShut {NoStop}%
\bibitem [{\citenamefont {Wiseman}\ and\ \citenamefont {Milburn}(2010)}]{WM10}%
  \BibitemOpen
  \bibfield  {author} {\bibinfo {author} {\bibfnamefont {H.~M.}\ \bibnamefont
  {Wiseman}}\ and\ \bibinfo {author} {\bibfnamefont {G.~J.}\ \bibnamefont
  {Milburn}},\ }\href@noop {} {\emph {\bibinfo {title} {Quantum Measurement and
  Control}}}\ (\bibinfo  {publisher} {Cambridge Univ. Press},\ \bibinfo
  {address} {Cambridge, UK},\ \bibinfo {year} {2010})\BibitemShut {NoStop}%
\bibitem [{\citenamefont {Dong}\ and\ \citenamefont {Petersen}(2010)}]{DP10}%
  \BibitemOpen
  \bibfield  {author} {\bibinfo {author} {\bibfnamefont {D.}~\bibnamefont
  {Dong}}\ and\ \bibinfo {author} {\bibfnamefont {I.}~\bibnamefont
  {Petersen}},\ }\href@noop {} {\bibfield  {journal} {\bibinfo  {journal} {IET
  Control Theory Appl.}\ }\textbf {\bibinfo {volume} {4}},\ \bibinfo {pages}
  {2651} (\bibinfo {year} {2010})}\BibitemShut {NoStop}%
\bibitem [{\citenamefont {Zhang}\ and\ \citenamefont {James}(2012)}]{ZJ12S}%
  \BibitemOpen
  \bibfield  {author} {\bibinfo {author} {\bibfnamefont {G.}~\bibnamefont
  {Zhang}}\ and\ \bibinfo {author} {\bibfnamefont {M.~R.}\ \bibnamefont
  {James}},\ }\href@noop {} {\bibfield  {journal} {\bibinfo  {journal}
  {￼Chinese Science Bulletin}\ }\textbf {\bibinfo {volume} {57}},\ \bibinfo
  {pages} {2200} (\bibinfo {year} {2012})}\BibitemShut {NoStop}%
\bibitem [{\citenamefont {Pan}\ \emph {et~al.}(2014)\citenamefont {Pan},
  \citenamefont {Amini}, \citenamefont {Miao}, \citenamefont {Gough},
  \citenamefont {Ugrinovskii},\ and\ \citenamefont {James}}]{PZMGUJ14}%
  \BibitemOpen
  \bibfield  {author} {\bibinfo {author} {\bibfnamefont {Y.}~\bibnamefont
  {Pan}}, \bibinfo {author} {\bibfnamefont {H.}~\bibnamefont {Amini}}, \bibinfo
  {author} {\bibfnamefont {Z.}~\bibnamefont {Miao}}, \bibinfo {author}
  {\bibfnamefont {J.}~\bibnamefont {Gough}}, \bibinfo {author} {\bibfnamefont
  {V.}~\bibnamefont {Ugrinovskii}}, \ and\ \bibinfo {author} {\bibfnamefont
  {M.~R.}\ \bibnamefont {James}},\ }\href@noop {} {\bibfield  {journal}
  {\bibinfo  {journal} {Journal of Mathematical Physics}\ }\textbf {\bibinfo
  {volume} {55}},\ \bibinfo {pages} {06201} (\bibinfo {year}
  {2014})}\BibitemShut {NoStop}%
\bibitem [{\citenamefont {Szigeti}\ \emph {et~al.}(2013)\citenamefont
  {Szigeti}, \citenamefont {Adlong}, \citenamefont {Hush}, \citenamefont
  {Carvalho},\ and\ \citenamefont {Hope}}]{Szigeti:2013}%
  \BibitemOpen
  \bibfield  {author} {\bibinfo {author} {\bibfnamefont {S.~S.}\ \bibnamefont
  {Szigeti}}, \bibinfo {author} {\bibfnamefont {S.~J.}\ \bibnamefont {Adlong}},
  \bibinfo {author} {\bibfnamefont {M.~R.}\ \bibnamefont {Hush}}, \bibinfo
  {author} {\bibfnamefont {A.~R.~R.}\ \bibnamefont {Carvalho}}, \ and\ \bibinfo
  {author} {\bibfnamefont {J.~J.}\ \bibnamefont {Hope}},\ }\href {\doibase
  10.1103/PhysRevA.87.013626} {\bibfield  {journal} {\bibinfo  {journal} {Phys.
  Rev. A}\ }\textbf {\bibinfo {volume} {87}},\ \bibinfo {pages} {013626}
  (\bibinfo {year} {2013})}\BibitemShut {NoStop}%
\bibitem [{\citenamefont {James}(2004)}]{James:2004}%
  \BibitemOpen
  \bibfield  {author} {\bibinfo {author} {\bibfnamefont {M.~R.}\ \bibnamefont
  {James}},\ }\href {\doibase 10.1103/PhysRevA.69.032108} {\bibfield  {journal}
  {\bibinfo  {journal} {Phys. Rev. A}\ }\textbf {\bibinfo {volume} {69}},\
  \bibinfo {pages} {032108} (\bibinfo {year} {2004})}\BibitemShut {NoStop}%
\bibitem [{\citenamefont {James}\ \emph {et~al.}(2008)\citenamefont {James},
  \citenamefont {Nurdin},\ and\ \citenamefont {Petersen}}]{JNP08}%
  \BibitemOpen
  \bibfield  {author} {\bibinfo {author} {\bibfnamefont {M.~R.}\ \bibnamefont
  {James}}, \bibinfo {author} {\bibfnamefont {H.~I.}\ \bibnamefont {Nurdin}}, \
  and\ \bibinfo {author} {\bibfnamefont {I.~R.}\ \bibnamefont {Petersen}},\
  }\href@noop {} {\bibfield  {journal} {\bibinfo  {journal} {IEEE Transactions
  on Automatic Control}\ }\textbf {\bibinfo {volume} {53}},\ \bibinfo {pages}
  {1787} (\bibinfo {year} {2008})}\BibitemShut {NoStop}%
\bibitem [{\citenamefont {Lloyd}(2000)}]{SL00}%
  \BibitemOpen
  \bibfield  {author} {\bibinfo {author} {\bibfnamefont {S.}~\bibnamefont
  {Lloyd}},\ }\href@noop {} {\bibfield  {journal} {\bibinfo  {journal} {Phys.
  Rev. A}\ }\textbf {\bibinfo {volume} {62}},\ \bibinfo {pages} {022108}
  (\bibinfo {year} {2000})}\BibitemShut {NoStop}%
\bibitem [{\citenamefont {Nurdin}\ \emph {et~al.}(2009)\citenamefont {Nurdin},
  \citenamefont {James},\ and\ \citenamefont {Petersen}}]{NJP09}%
  \BibitemOpen
  \bibfield  {author} {\bibinfo {author} {\bibfnamefont {H.~I.}\ \bibnamefont
  {Nurdin}}, \bibinfo {author} {\bibfnamefont {M.~R.}\ \bibnamefont {James}}, \
  and\ \bibinfo {author} {\bibfnamefont {I.~R.}\ \bibnamefont {Petersen}},\
  }\href@noop {} {\bibfield  {journal} {\bibinfo  {journal} {Automatica}\
  }\textbf {\bibinfo {volume} {45}},\ \bibinfo {pages} {1837} (\bibinfo {year}
  {2009})}\BibitemShut {NoStop}%
\bibitem [{\citenamefont {Yamamoto}(2014)}]{NY14}%
  \BibitemOpen
  \bibfield  {author} {\bibinfo {author} {\bibfnamefont {N.}~\bibnamefont
  {Yamamoto}},\ }\href@noop {} {\bibfield  {journal} {\bibinfo  {journal}
  {arXiv:1406.6466v1}\ } (\bibinfo {year} {2014})}\BibitemShut {NoStop}%
\bibitem [{\citenamefont {Mabuchi}(2008)}]{HM08}%
  \BibitemOpen
  \bibfield  {author} {\bibinfo {author} {\bibfnamefont {H.}~\bibnamefont
  {Mabuchi}},\ }\href@noop {} {\bibfield  {journal} {\bibinfo  {journal} {Phys.
  Rev. A}\ }\textbf {\bibinfo {volume} {78}},\ \bibinfo {pages} {032323}
  (\bibinfo {year} {2008})}\BibitemShut {NoStop}%
\bibitem [{\citenamefont {Wiseman}\ and\ \citenamefont
  {Milburn}(1994)}]{WM94b}%
  \BibitemOpen
  \bibfield  {author} {\bibinfo {author} {\bibfnamefont {H.~M.}\ \bibnamefont
  {Wiseman}}\ and\ \bibinfo {author} {\bibfnamefont {G.~J.}\ \bibnamefont
  {Milburn}},\ }\href@noop {} {\bibfield  {journal} {\bibinfo  {journal} {Phys.
  Rev. A}\ }\textbf {\bibinfo {volume} {49}},\ \bibinfo {pages} {4110}
  (\bibinfo {year} {1994})}\BibitemShut {NoStop}%
\bibitem [{\citenamefont {Stengel}(1994)}]{Stengel:1994}%
  \BibitemOpen
  \bibfield  {author} {\bibinfo {author} {\bibfnamefont {R.}~\bibnamefont
  {Stengel}},\ }\href@noop {} {\emph {\bibinfo {title} {Optimal Control and
  Estimation}}}\ (\bibinfo  {publisher} {Dover Publications},\ \bibinfo {year}
  {1994})\BibitemShut {NoStop}%
\bibitem [{\citenamefont {Doherty}\ and\ \citenamefont
  {Jacobs}(1999)}]{Doherty:1999}%
  \BibitemOpen
  \bibfield  {author} {\bibinfo {author} {\bibfnamefont {A.~C.}\ \bibnamefont
  {Doherty}}\ and\ \bibinfo {author} {\bibfnamefont {K.}~\bibnamefont
  {Jacobs}},\ }\href {\doibase 10.1103/PhysRevA.60.2700} {\bibfield  {journal}
  {\bibinfo  {journal} {Phys. Rev. A}\ }\textbf {\bibinfo {volume} {60}},\
  \bibinfo {pages} {2700} (\bibinfo {year} {1999})}\BibitemShut {NoStop}%
\bibitem [{\citenamefont {Miao}\ and\ \citenamefont {James}(2012)}]{MJ12}%
  \BibitemOpen
  \bibfield  {author} {\bibinfo {author} {\bibfnamefont {Z.}~\bibnamefont
  {Miao}}\ and\ \bibinfo {author} {\bibfnamefont {M.~R.}\ \bibnamefont
  {James}},\ }in\ \href@noop {} {\emph {\bibinfo {booktitle} {Proceedings of
  51st IEEE Conference on Decision and Control}}}\ (\bibinfo {address} {Maui,
  Hawaii},\ \bibinfo {year} {2012})\ pp.\ \bibinfo {pages}
  {1680--1684}\BibitemShut {NoStop}%
\bibitem [{\citenamefont {Miao}\ \emph {et~al.}(2013)\citenamefont {Miao},
  \citenamefont {Espinosa}, \citenamefont {Petersen}, \citenamefont
  {Ugrinovskii},\ and\ \citenamefont {James}}]{MEPUJ13}%
  \BibitemOpen
  \bibfield  {author} {\bibinfo {author} {\bibfnamefont {Z.}~\bibnamefont
  {Miao}}, \bibinfo {author} {\bibfnamefont {L.~A.~D.}\ \bibnamefont
  {Espinosa}}, \bibinfo {author} {\bibfnamefont {I.~R.}\ \bibnamefont
  {Petersen}}, \bibinfo {author} {\bibfnamefont {V.}~\bibnamefont
  {Ugrinovskii}}, \ and\ \bibinfo {author} {\bibfnamefont {M.~R.}\ \bibnamefont
  {James}},\ }in\ \href@noop {} {\emph {\bibinfo {booktitle} {Proceedings of
  3rd Australian Control Conference}}}\ (\bibinfo {year} {2013})\ pp.\ \bibinfo
  {pages} {313--318}\BibitemShut {NoStop}%
\bibitem [{\citenamefont {Adesso}\ and\ \citenamefont
  {Illuminati}(2005)}]{AI05}%
  \BibitemOpen
  \bibfield  {author} {\bibinfo {author} {\bibfnamefont {G.}~\bibnamefont
  {Adesso}}\ and\ \bibinfo {author} {\bibfnamefont {F.}~\bibnamefont
  {Illuminati}},\ }\href@noop {} {\bibfield  {journal} {\bibinfo  {journal}
  {Phys. Rev. A.}\ }\textbf {\bibinfo {volume} {72}},\ \bibinfo {pages}
  {032334} (\bibinfo {year} {2005})}\BibitemShut {NoStop}%
\bibitem [{\citenamefont {Parthasarathy}(1992)}]{KP92}%
  \BibitemOpen
  \bibfield  {author} {\bibinfo {author} {\bibfnamefont {K.~R.}\ \bibnamefont
  {Parthasarathy}},\ }\href@noop {} {\emph {\bibinfo {title} {An Introduction
  to Quantum Stochastic Calculus}}},\ Modern Birkh{\"a}user Classics\ (\bibinfo
   {publisher} {Birkhauser},\ \bibinfo {address} {Berlin, Germany},\ \bibinfo
  {year} {1992})\BibitemShut {NoStop}%
\bibitem [{\citenamefont {Gough}\ and\ \citenamefont {James}(2009)}]{GJ09J1}%
  \BibitemOpen
  \bibfield  {author} {\bibinfo {author} {\bibfnamefont {J.}~\bibnamefont
  {Gough}}\ and\ \bibinfo {author} {\bibfnamefont {M.~R.}\ \bibnamefont
  {James}},\ }\href@noop {} {\bibfield  {journal} {\bibinfo  {journal} {IEEE
  Transactions on Automatic Control}\ }\textbf {\bibinfo {volume} {54}},\
  \bibinfo {pages} {2530} (\bibinfo {year} {2009})}\BibitemShut {NoStop}%
\bibitem [{\citenamefont {James}\ and\ \citenamefont {Gough}(2010)}]{JG10}%
  \BibitemOpen
  \bibfield  {author} {\bibinfo {author} {\bibfnamefont {M.~R.}\ \bibnamefont
  {James}}\ and\ \bibinfo {author} {\bibfnamefont {J.~E.}\ \bibnamefont
  {Gough}},\ }\href@noop {} {\bibfield  {journal} {\bibinfo  {journal} {IEEE
  Transactions on Automatic Control}\ }\textbf {\bibinfo {volume} {55}},\
  \bibinfo {pages} {1806} (\bibinfo {year} {2010})}\BibitemShut {NoStop}%
\bibitem [{\citenamefont {Gardiner}\ and\ \citenamefont {Zoller}(2000)}]{GZ00}%
  \BibitemOpen
  \bibfield  {author} {\bibinfo {author} {\bibfnamefont {C.}~\bibnamefont
  {Gardiner}}\ and\ \bibinfo {author} {\bibfnamefont {P.}~\bibnamefont
  {Zoller}},\ }\href@noop {} {\emph {\bibinfo {title} {Quantum Noise: A
  Handbook of Markovian and Non-Markovian Quantum Stochastic Methods With
  Applications to Quantum Optics}}},\ \bibinfo {edition} {2nd}\ ed.,\ Springer
  Series in Synergetics\ (\bibinfo  {publisher} {Springer},\ \bibinfo {address}
  {New York},\ \bibinfo {year} {2000})\BibitemShut {NoStop}%
\bibitem [{\citenamefont {Vuglar}\ and\ \citenamefont {Amini}(2014)}]{VA14}%
  \BibitemOpen
  \bibfield  {author} {\bibinfo {author} {\bibfnamefont {S.~L.}\ \bibnamefont
  {Vuglar}}\ and\ \bibinfo {author} {\bibfnamefont {H.}~\bibnamefont {Amini}},\
  }\href@noop {} {\bibfield  {journal} {\bibinfo  {journal} {New Journal of
  Physics}\ }\textbf {\bibinfo {volume} {16}},\ \bibinfo {pages} {125005}
  (\bibinfo {year} {2014})}\BibitemShut {NoStop}%
\bibitem [{\citenamefont {Kwakernaak}\ and\ \citenamefont
  {Sivan}(1972)}]{KS72}%
  \BibitemOpen
  \bibfield  {author} {\bibinfo {author} {\bibfnamefont {H.}~\bibnamefont
  {Kwakernaak}}\ and\ \bibinfo {author} {\bibfnamefont {R.}~\bibnamefont
  {Sivan}},\ }\href@noop {} {\emph {\bibinfo {title} {Linear Optimal Control
  Systems}}}\ (\bibinfo  {publisher} {Wiley-Interscience},\ \bibinfo {year}
  {1972})\BibitemShut {NoStop}%
\bibitem [{\citenamefont {Ellis}(2002)}]{GE02}%
  \BibitemOpen
  \bibfield  {author} {\bibinfo {author} {\bibfnamefont {G.}~\bibnamefont
  {Ellis}},\ }\href@noop {} {\emph {\bibinfo {title} {Observers in Control
  Systems}}}\ (\bibinfo  {publisher} {Academic Press},\ \bibinfo {address} {San
  Diego, USA},\ \bibinfo {year} {2002})\BibitemShut {NoStop}%
\bibitem [{\citenamefont {Isar}(2008)}]{AI08}%
  \BibitemOpen
  \bibfield  {author} {\bibinfo {author} {\bibfnamefont {A.}~\bibnamefont
  {Isar}},\ }\href@noop {} {\bibfield  {journal} {\bibinfo  {journal} {Eur.
  Phys. J. Special Topics}\ }\textbf {\bibinfo {volume} {160}},\ \bibinfo
  {pages} {225} (\bibinfo {year} {2008})}\BibitemShut {NoStop}%
\bibitem [{\citenamefont {Marian}\ and\ \citenamefont {Marian}(2012)}]{MM12}%
  \BibitemOpen
  \bibfield  {author} {\bibinfo {author} {\bibfnamefont {P.}~\bibnamefont
  {Marian}}\ and\ \bibinfo {author} {\bibfnamefont {T.~A.}\ \bibnamefont
  {Marian}},\ }\href@noop {} {\bibfield  {journal} {\bibinfo  {journal}
  {Physical Review A}\ }\textbf {\bibinfo {volume} {86}},\ \bibinfo {pages}
  {022340} (\bibinfo {year} {2012})}\BibitemShut {NoStop}%
\bibitem [{\citenamefont {Anderson}\ and\ \citenamefont {Moore}(1979)}]{AM79}%
  \BibitemOpen
  \bibfield  {author} {\bibinfo {author} {\bibfnamefont {B.~D.~O.}\
  \bibnamefont {Anderson}}\ and\ \bibinfo {author} {\bibfnamefont {J.~B.}\
  \bibnamefont {Moore}},\ }\href@noop {} {\emph {\bibinfo {title} {Optimal
  Filtering}}}\ (\bibinfo  {publisher} {Prentice-Hall},\ \bibinfo {address}
  {Englewood Cliffs, NJ},\ \bibinfo {year} {1979})\BibitemShut {NoStop}%
\bibitem [{\citenamefont {Scarani}\ \emph {et~al.}(2005)\citenamefont
  {Scarani}, \citenamefont {Iblisdir}, \citenamefont {Gisin},\ and\
  \citenamefont {Ac\'in}}]{SLGA05}%
  \BibitemOpen
  \bibfield  {author} {\bibinfo {author} {\bibfnamefont {V.}~\bibnamefont
  {Scarani}}, \bibinfo {author} {\bibfnamefont {S.}~\bibnamefont {Iblisdir}},
  \bibinfo {author} {\bibfnamefont {N.}~\bibnamefont {Gisin}}, \ and\ \bibinfo
  {author} {\bibfnamefont {A.}~\bibnamefont {Ac\'in}},\ }\href@noop {}
  {\bibfield  {journal} {\bibinfo  {journal} {Reviews of Modern Physics}\
  }\textbf {\bibinfo {volume} {77}},\ \bibinfo {pages} {1225} (\bibinfo {year}
  {2005})}\BibitemShut {NoStop}%
\bibitem [{\citenamefont {Wootters}\ and\ \citenamefont {Zurek}(2009)}]{KZ09}%
  \BibitemOpen
  \bibfield  {author} {\bibinfo {author} {\bibfnamefont {W.~K.}\ \bibnamefont
  {Wootters}}\ and\ \bibinfo {author} {\bibfnamefont {W.~H.}\ \bibnamefont
  {Zurek}},\ }\href@noop {} {\bibfield  {journal} {\bibinfo  {journal} {Physics
  Today}\ }\textbf {\bibinfo {volume} {62}},\ \bibinfo {pages} {76} (\bibinfo
  {year} {2009})}\BibitemShut {NoStop}%
\bibitem [{\citenamefont {Amini}\ \emph {et~al.}(2014)\citenamefont {Amini},
  \citenamefont {Miao}, \citenamefont {Pan},\ and\ \citenamefont
  {James}}]{Amini:2014}%
  \BibitemOpen
  \bibfield  {author} {\bibinfo {author} {\bibfnamefont {H.}~\bibnamefont
  {Amini}}, \bibinfo {author} {\bibfnamefont {Z.}~\bibnamefont {Miao}},
  \bibinfo {author} {\bibfnamefont {Y.}~\bibnamefont {Pan}}, \ and\ \bibinfo
  {author} {\bibfnamefont {M.~R.}\ \bibnamefont {James}},\ }\href@noop {}
  {\enquote {\bibinfo {title} {Quantum linear least mean squares estimators},}\
  } (\bibinfo {year} {2014}),\ \bibinfo {note} {arXiv:1406.4599}\BibitemShut
  {NoStop}%
\bibitem [{\citenamefont {Belavkin}(1994)}]{VPB93}%
  \BibitemOpen
  \bibfield  {author} {\bibinfo {author} {\bibfnamefont {V.~P.}\ \bibnamefont
  {Belavkin}},\ }\href@noop {} {\bibfield  {journal} {\bibinfo  {journal}
  {Theory Probab. Appl.}\ }\textbf {\bibinfo {volume} {38}},\ \bibinfo {pages}
  {573} (\bibinfo {year} {1994})}\BibitemShut {NoStop}%
\bibitem [{\citenamefont {Bouten}\ \emph {et~al.}(2007)\citenamefont {Bouten},
  \citenamefont {van Handel},\ and\ \citenamefont {James}}]{BHJ07}%
  \BibitemOpen
  \bibfield  {author} {\bibinfo {author} {\bibfnamefont {L.}~\bibnamefont
  {Bouten}}, \bibinfo {author} {\bibfnamefont {R.}~\bibnamefont {van Handel}},
  \ and\ \bibinfo {author} {\bibfnamefont {M.~R.}\ \bibnamefont {James}},\
  }\href@noop {} {\bibfield  {journal} {\bibinfo  {journal} {SIAM J. Control
  and Optimization}\ }\textbf {\bibinfo {volume} {46}},\ \bibinfo {pages}
  {2199} (\bibinfo {year} {2007})}\BibitemShut {NoStop}%
\bibitem [{\citenamefont {Hush}\ \emph
  {et~al.}(2013{\natexlab{c}})\citenamefont {Hush}, \citenamefont
  {Lesanovsky},\ and\ \citenamefont {Garrahan}}]{Hush:2013a}%
  \BibitemOpen
  \bibfield  {author} {\bibinfo {author} {\bibfnamefont {M.~R.}\ \bibnamefont
  {Hush}}, \bibinfo {author} {\bibfnamefont {I.}~\bibnamefont {Lesanovsky}}, \
  and\ \bibinfo {author} {\bibfnamefont {J.~P.}\ \bibnamefont {Garrahan}},\
  }\href@noop {} {\enquote {\bibinfo {title} {A map for finding hidden quantum
  markovian models},}\ } (\bibinfo {year} {2013}{\natexlab{c}}),\ \bibinfo
  {note} {arXiv:1311.7394}\BibitemShut {NoStop}%
\end{thebibliography}%

\end{document}